\def\semicolon{;}
\def\applytolist#1{
    \expandafter\def\csname multi#1\endcsname##1{
        \def\multiack{##1}\ifx\multiack\semicolon
            \def\next{\relax}
        \else
            \csname #1\endcsname{##1}
            \def\next{\csname multi#1\endcsname}
        \fi
        \next}
    \csname multi#1\endcsname}
\def\calc#1{\expandafter\def\csname c#1\endcsname{{\mathcal #1}}}
\def\bbc#1{\expandafter\def\csname bb#1\endcsname{{\mathbb #1}}}
\def\bfc#1{\expandafter\def\csname bf#1\endcsname{{\mathbf #1}}}
\def\sfc#1{\expandafter\def\csname s#1\endcsname{{\sf #1}}}
\def\rfc#1{\expandafter\def\csname r#1\endcsname{{\mathrm #1}}}
\def\scfc#1{\expandafter\def\csname sc#1\endcsname{{\mathscr #1}}}
\DeclareMathOperator{\Ad}{Ad}
\DeclareMathOperator{\supp}{supp}
\DeclareMathOperator{\diff}{Diff_+}
\def\bb1{\mathbbm{1}}
\def\<{\langle}
\def\>{\rangle}
\def\Mob{\mathrm{M\ddot{o}b}}
\def\conf{{\mathscr C}}
\def\dS{\mathrm{dS}}
\def\lorentz{{\cL^\uparrow_+}}
\def\fin{\text{fin}}
\newtheorem{theorem}{Theorem}[section]
\newtheorem{proposition}[theorem]{Proposition}
\theoremstyle{remark}
\title{Towards integrable perturbation of 2d CFT \\
on de Sitter space}
\author[1]{Christian D.\ J{\"a}kel\thanks{{\tt jaekel@ime.usp.br}}}
\author[2]{Yoh Tanimoto\thanks{{\tt hoyt@mat.uniroma2.it}}}
\affil[1]{Department of Applied Mathematics, University of S\~ao Paulo (USP), \authorcr
   Rua de Mat{\~a}o 1010, CEP 05508-090 S{\~a}o Paulo, Brazil}
\affil[2]{Dipartimento di Matematica, Universit\`a di Roma Tor Vergata,\authorcr
   Via della Ricerca Scientifica 1, I-00133 Roma, Italy}
\date{}
\begin{document}
\maketitle

\begin{abstract}
We describe a procedure to deform the dynamics of a two-dimensional 
conformal net to possibly obtain a Haag-Kastler net on the de Sitter spacetime. 
The new dynamics is given by adding a primary field smeared on the time-zero circle
to the Lorentz generators of the conformal net.
 
As an example, we take an extension of the chiral $\rU(1)$-current net by a charged field
with conformal dimension $d < \frac14$. We show that the perturbing operators are
defined on a dense domain.
\end{abstract}

\section{Introduction}
The first interacting quantum field theories, the $\scP(\phi)_2$-models, have been constructed by starting with the free field
on the Minkowski space,
defining an interaction term, perturbing the dynamics by it locally, finding the interacting vacuum
and changing the Hilbert space \cite{GJ72}. 
The $\scP(\phi)_2$-models have been constructed also
on the de Sitter space \cite{FHN75}, then recently formulated into the operator-algebraic framework \cite{BJM23}.
Perturbing the dynamics on the de Sitter space has the advantage that one may construct
interacting models on the same Hilbert space, as one can avoid Haag's theorem \cite{Weiner11}.

The procedure 
of perturbing the dynamics 
has 
been formulated 
in \cite{JM18}:
One starts with a Haag-Kastler net on the de Sitter space (in the sense of \cite{BB99}),
then alters the Lorentz boosts by defining the new ones as the modular groups 
for a rotation invariant, interacting vacuum vector.
The above $\scP(\phi)_2$-models fit in this programme.
As the arguments do not depend on the properties of free fields,
one may wish to find other examples.
We propose such an example in this work, where the starting QFT is a two-dimensional conformal field theory
and the perturbation is given by a primary field.
Specifically, the conformal field theory is a two-dimensional extension
of the chiral $\rU(1)$-current algebra, and we take the charge-carrying field as the interaction term.
Such fields have been constructed recently as two-dimensional conformal Wightman fields \cite{AGT23Pointed}.
Such a conformal field can be seen as a field on the de Sitter space through a conformal map.

The $\rU(1)$-current algebra is defined on the Hilbert space $\cH_0$, and the two-dimensional
extension contains two copies of it as the left and right chiral components.
The chiral components have charged sectors $\cH_\alpha$ parametrized by $\alpha \in \bbR$.
For a fixed $\alpha_0 \in \bbR$, we take $\bigoplus_{j\in \bbZ} \cH_{j\alpha_0} \otimes \cH_{j\alpha_0}$ as 
the Hilbert space\footnote{From a general point of view, it is more natural to
take $\bigoplus_{j\in\bbZ}\cH_{j\alpha_0} \otimes \cH_{-j\alpha_0}$ (cf.\! \cite{LR95}).
In this case, the resulting net is unitarily equivalent through the map $J(z) \to -J(z)$,
which can be unitarily implemented.
That is, by denoting $\theta$ the adjoint action by this unitary,
one has $\rho_\alpha\circ \theta = \theta \circ \rho_{-\alpha}$.
This should be distinguished from \textbf{unitary (non)equivalence} of sectors, $\Ad U \circ \rho_\alpha \neq \rho_{-\alpha}$
for any $U$ if $\alpha \neq 0$.
Note that $J(z)$ is defined in Section \ref{sec:3.1}.}. 
For $\alpha \in \alpha_0\bbZ$, there is a charged field $Y_\alpha(z)$ that maps $\cH_\beta\otimes \cH_\beta$
to $\cH_{\beta+\alpha}\otimes \cH_{\beta+\alpha}$, $\beta \in \alpha_0\bbZ$. This is the basis of our perturbing field.
We show that the symmetric field $Y_\alpha(z)\otimes Y_\alpha(z^{-1}) + Y_\alpha(z)^* \otimes Y_\alpha(z^{-1})^*$
can be added to the Lorentz generators of the de Sitter space, and they still satisfy the Lorentz relations weakly.

From a physical point of view, perturbing the Hamiltonian of a CFT by a (relevant) field has been proposed
to obtain massive integrable models in \cite{Zamolodchikov89-1}. Depending on the initial CFT and the perturbing field,
various integrable models should be obtained.
While our results are specific to the $\rU(1)$-current, the proof of (weak) Lorentz relations
depends essentially on the fact that we take a primary field that is commutative at the time-zero circle.
Therefore, the idea should generalize to many CFTs and primary fields.

This paper is organized as follows.
In Section 2, we briefly recall the algebraic framework on the de Sitter spacetime,
how a two-dimensional CFT can be considered on the de Sitter spacetime and
the perturbation of the dynamics by a local field.
In Section 3, a family of two-dimensional extensions of the $\rU(1)$-current net and their
charged fields are reviewed. In Section 4, we make estimates of the charged fields
restricted to the time-zero circle and show that the restriction defines operators
if the charge $\alpha$ satisfies $|\alpha| < \frac1{\sqrt 2}$.
Section 5 shows that the time-zero charged fields commute with each other.
In Section 6, we show that the Lorentz generators perturbed by the charged field
still satisfy the Lorentz relations weakly on a certain domain.
In Section 7, we describe how this programme can be completed.

\section{General strategy}\label{general}
\subsection{Haag-Kastler nets on the de Sitter space}
The two-dimensional de Sitter space $\dS^2$ is embedded 
in the ambient three-dimensional Minkowski space $\bbR^{1+2}$
by the equation $x_0^2 - x_1^2 - x_2^2 = -  r^2$, 
where $r>0$.
The isometry group of $\dS^2$ is the (proper orthochronous) Lorentz 
group $\lorentz$ (the connected component of the stabilizer subgroup 
of the point $(0,0,0)$ in the three-dimensional Poincar\'e group), also called the de Sitter group.
On this space, the causal structure and the metric can be introduced 
by restricting those of the ambient Minkowski space.
The region $\{(x_0,x_1,x_2)\in \bbR^{1+2} \mid  | x_0 | < x_1  \}$ is called the wedge in 
the $x_1$-direction.
We denote its intersection with $\dS^2$ by $W_1$. 
Any image of $W_1$ by a Lorentz transformation is called
a \textbf{wedge} in $\dS^2$. For any wedge $W$ there is 
a one-parameter group $\Lambda_W(t)$ of Lorentz boosts
that fix $W$, which are referred to as the boosts associated 
with $W$.

The Haag-Kastler axioms, usually considered on the Minkowski space, 
can be also formulated on $\dS^2$ \cite{BB99},
where the spectrum condition is replaced by the geodesic KMS property 
as below. A \textbf{Haag-Kastler net on $\dS^2$} is a triple $(\cA, U, \Omega)$, 
where $\cA$ is a family of von Neumann algebras on a Hilbert space $\cH$
parametrized by open regions $O \subset \dS^2$,
$U$ is a unitary representation of $\lorentz$ on $\cH$
(continuous in the strong-operator topology) and $\Omega$ is a vector in $\cH$, such that 
\begin{enumerate}[{(HK}1{)}]
\item\label{HK:isotony} {\bf Isotony:} $\cA(O_1) \subset \cA(O_2)$ for $O_1 \subset O_2$;
\item {\bf Locality:} If $O_1$ and $O_2$ are spacelike separated,
then $\cA(O_1)\subset \cA(O_2)'$;
\item {\bf Lorentz covariance:} $\cA(gO) = \Ad U(g)(\cA(O))$ for 
$g \in \lorentz$;
\item {\bf Cyclicity}: $\Omega$ is cyclic for each $\cA(O)$;
\item\label{HK:KMS} {\bf The geodesic KMS property:} 
For any wedge $W$, it holds that $U(\Lambda_W(2\pi t)) = \Delta_W^{-it}$,
where $\Delta_W^{it}$ is the modular group of the algebra $\cA(W)$ with respect to $\Omega$.
\end{enumerate}
The geodesic KMS property is equivalently stated by saying
that $U(\Lambda_W(t))$ satisfies the KMS condition for $\cA(W)$ with 
temperature $2\pi$ 
with respect to $\Omega$ \cite{BB99}.

\subsection{Two-dimensional conformal net on the de Sitter space}
Our starting point is a two-dimensional conformal field theory.
A two-dimensional conformal field theory on the Minkowski space is a theory that is (locally) covariant with respect
not only to the Poincar\'e group but also to the universal covering $\overline{\Mob}\times\overline{\Mob}$ of the M\"obius group
$\Mob \times \Mob$ including special conformal transformations, and often further to $\overline{\diff(S^1)}\times \overline{\diff(S^1)}$.
It is known that any two-dimensional conformal (Haag-Kastler) net, \textit{a priori} defined on the Minkowski space
$\bbR^{1+1}$, extends to the Einstein cylinder \cite{KL04-2}\cite[Theorem A.5]{MT18}.
Then, the de Sitter space $\dS^2$ can be embedded conformally in the Einstein cylinder,
and we can restrict the given conformal net to this subset \cite{GL03}, see Figure \ref{fig:conformal}.
Therefore, a two-dimensional 
CFT can be considered as a QFT on $\dS^2$ in this natural sense.
Let us briefly review how this is done.

In our framework, a conformal Haag-Kastler net can be described as follows.
First, we consider $\bbR^{1+1}$ as the product of two lightrays.
Each lightray $\bbR$ has $S^1$ as the one-point compactification, and the 
group $\diff(S^1)$ acts on it.
Therefore, $\overline{\diff(S^1)}\times \overline{\diff(S^1)}$ acts on $\bbR^{1+1}$ 
locally in the sense of \cite{BGL93}.
Furthermore, by spacelike locality, this action factors through the subgroup
$\mathfrak R := \{R_{2n\pi}\times R_{-2n\pi}:n \in \bbZ\}$ where 
$R_{t} \in \overline{\diff(S^1)}$ is 
the lift of the rotation by $t$ \cite[Proposition~2.1]{KL04-2} 
(see also \cite[Theorem A.5]{MT18}).
We denote this group by $\conf$.
The Minkowski space $\bbR^{1+1}$ is conformally equivalent to the 
product $I_{2\pi}\times I_{2\pi}$
of open intervals of length $2\pi$, and through the local action of $\conf$,
the Haag-Kastler net can be extended to $\bbR\times \bbR$ 
quotiented by the action of $\mathfrak R$,
where $\bbR$ is the universal covering of~$S^1$. 
This space is conformally equivalent to the Einstein 
cylinder $\mathcal{E} = S^1 \times \bbR$ (the product structure is different from the previous one).
We say that two regions $O_1, O_2$ are spacelike separated if there is a diamond
obtained by shifting the Minkowski space $I_{2\pi} \times I_{2\pi}$, which includes $O_1, O_2$
such that $O_1$ and $O_2$ are spacelike separated there.

To be precise, the axioms 
for conformal nets on $\mathcal{E}$ are
the following:
Let $\cA$ be a family of von Neumann algebras on $\cH$ parametrized by open regions in $\mathcal{E}$,
let $U$ be a unitary projective representation of the group $\conf$ on $\cH$
(note that the restriction of $U$ to the subgroup $\overline{\Mob}\times\overline{\Mob}/\mathfrak R$ can be actually made into a true representation
of $\overline{\Mob}\times\overline{\Mob}$ \cite[Theorem~7.1]{Bargmann54},
and the generators of one-parameter subgroups in $\overline{\Mob}\times\overline{\Mob}$ are uniquely defined) and
let $\Omega$ be a vector in $\cH$ 
such that
\begin{enumerate}[{(CN}1{)}]
\item {\bf Isotony:} $\cA(O_1) \subset \cA(O_2)$ for $O_1 \subset O_2$;
\item {\bf Locality:} If $O_1$ and $O_2$ are spacelike separated,
then $\cA(O_1)\subset \cA(O_2)'$;
\item {\bf Conformal covariance:} $\cA(\gamma O) = \Ad U(\gamma)(\cA(O))$ for 
$\gamma \in \conf$, and if $O$ is disjoint from $\supp \gamma$, then $\Ad U(\gamma)(x) = x$
for $x \in \cA(O)$;
\item {\bf Positive energy}: The generators of the chiral rotation subgroups $R_t\times \iota, \iota \times R_t$,
where $\iota \in \overline{\diff(S^1)}$ is the unit element, are positive;
\item {\bf Vacuum}: $\Omega$ is cyclic for each $\cA(O)$ and is a unique (up to scalar) vector such that $U(\gamma)\Omega = \Omega$
for $\gamma \in \overline{\Mob}\times \overline{\Mob}$ in the sense above.
\end{enumerate}
In a conformal net, the Bisognano-Wichmann property holds 
automatically \cite{BGL93}.

\goodbreak
On the Einstein cylinder $\cE$, the
strip of temporal width $\pi$, see Figure \ref{fig:conformal}, 
is conformally equivalent to the de Sitter space \cite{GL03}
(for each radius $r$ there are different de Sitter 
spaces, 
but we do not specify 
$r$,
because 
the only point is that any of such de Sitter 
spaces
is conformally equivalent to the same part of the cylinder).
The time-zero circle $S^1$ ($x_0=0$) in the de Sitter space is 
the (compactified) time-zero line ($a_0=0$) on the cylinder,
and space rotations act on it.
Other Lorentz transformations are contained in the conformal 
group $\conf$. Indeed, the spacelike rotations $R_t\times R_{-t}$ 
and Lorentz boosts (that is the product of lightlike dilations with opposite sign) generate
the three-dimensional Lie group $\mathrm{SO}(2,1)$ (the $(2+1)$-dimensional Lorentz group),
also called the $2d$ de Sitter group.
Therefore, by restricting a conformal net to the de Sitter space,
it satisfies the axioms (HK\ref{HK:isotony}-\ref{HK:KMS}): the 
geodesic KMS property is satisfied because
of the Bisognano-Wichmann property.

\begin{figure}[ht]\centering
\begin{tikzpicture}[scale=0.75]
        \begin{scope}
        \draw [->] (-3,0) --(5,0) node [above right] {$a_1$};
         \draw [->] (0,-3)--(0,4) node [ right] {$a_0$};
          \draw [thick] (-2,0)-- (0,2) node [ left] {$M_0$};
          \draw [thick] (0,2)-- (2,0);
          \draw [thick] (-2,0)-- (0,-2);
          \draw [thick] (2,0)-- (0,-2);
          
           \draw [ thick] (0,0)-- (1,1);
          \draw [thick] (0,0)-- (1,-1);
          
\draw [ thick] (-0.4,0) -- (-0.9,0.5);
\draw [thick] (-0.4,0) -- (-0.9,-0.5);
\draw [thick] (-1.4,0) -- (-0.9,0.5);
\draw [thick] (-1.4,0) -- (-0.9,-0.5);
          
          \draw [thick,dotted] (-2,-3)--(-2,4);
          \draw [thick,dotted] (2,-3)--(2,4);
          
          \draw [thin,->] (-3,-3) -- (3.5,3.5) node [above right] {$a_R$};
          \draw [thin,->] (3,-3) -- (-3.5,3.5) node [above left] {$a_L$};
  \fill [color=black,opacity=0.2]
               (-0.4,0) -- (-0.9,0.5) -- (-1.4,0) --(-0.9,-0.5);
           \fill [color=black,opacity=0.6]
              (0,0) -- (1,1) -- (2,0) -- (1,-1);
        \end{scope}

        \begin{scope}[shift={(10,0)}]
        \draw [->] (-3,0) --(5,0) node [above right] {$a_1$};
         \draw [->] (0,-3)--(0,4) node [ right] {$a_0$};
          \draw [thick] (2,1)-- (-2,1) -- (-2,-1)--(2,-1)--cycle;
          \node at (-0.5,1.5) {$\mathrm{dS^2}$};
           \draw [ thick] (0,0)-- (1,1);
          \draw [thick] (0,0)-- (1,-1);
          
\draw [ thick] (-0.4,0) -- (-0.9,0.5);
\draw [thick] (-0.4,0) -- (-0.9,-0.5);
\draw [thick] (-1.4,0) -- (-0.9,0.5);
\draw [thick] (-1.4,0) -- (-0.9,-0.5);
          
          \draw [thick,dotted] (-2,-3)--(-2,4);
          \draw [thick,dotted] (2,-3)--(2,4);
          
          \draw [thin,->] (-3,-3) -- (3.5,3.5) node [above right] {$a_R$};
          \draw [thin,->] (3,-3) -- (-3.5,3.5) node [above left] {$a_L$};
  \fill [color=black,opacity=0.2]
               (-0.4,0) -- (-0.9,0.5) -- (-1.4,0) --(-0.9,-0.5);
           \fill [color=black,opacity=0.6]
              (0,0) -- (1,1) -- (2,0) -- (1,-1);
        \end{scope}
\end{tikzpicture}
\caption{The Minkowski space $M_0$ (cf.\! \cite[Figure 1]{AGT23Pointed}) and the de Sitter space $\mathrm{dS}^2$
conformally embedded in $\bbR^2$. The cylinder is obtained by identifying the dotted lines.
The dark grey region is a wedge $W$ and the light grey region is a double cone.}
\label{fig:conformal}
\end{figure}
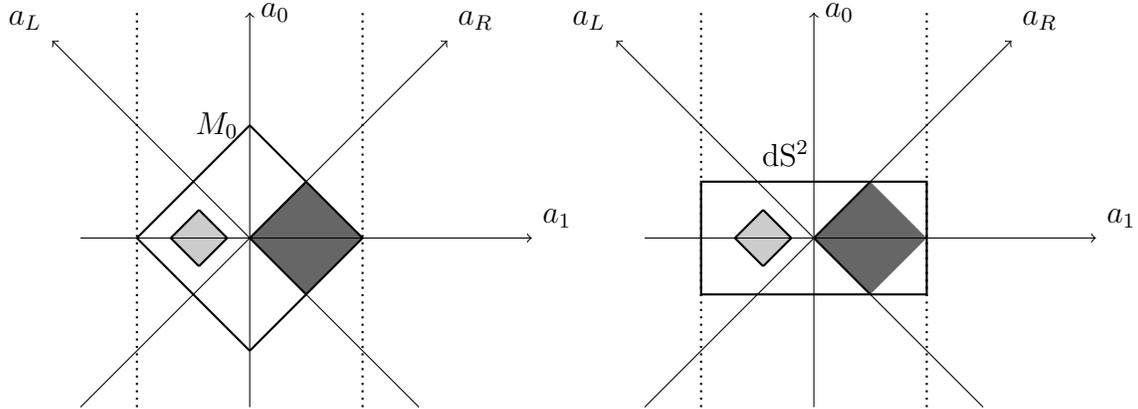

In general, a two-dimensional conformal net $\cA$ contains \textbf{chiral components},
that are observables living on the lightrays and invariant under the action of
$\iota \times \overline{\diff(S^1)}$ ($\overline{\diff(S^1)} \times \iota$, respectively).
They can be regarded as Haag-Kastler nets on the lightrays $\bbR$.  They
extend to $S^1$ by conformal covariance and are called \textbf{conformal nets on $S^1$}. 
More precisely, a triple 
$(\cA_0, U_0, \Omega_0)$, where $\cA_0$ is a family of von Neumann algebras on 
a Hilbert space~$\cH_0$ parametrized by
open connected nonempty non dense intervals in $S^1$, $U_0$ is a unitary projective representation
of $\diff(S^1)$ and $\Omega_0$ is a vector in $\cH_0$,
is called a \textbf{conformal net on $S^1$} if it satisfies
\begin{enumerate}[{(CNS}1{)}]
\item {\bf Isotony:} $\cA_0(I_1) \subset \cA_0(I_2)$ for $I_1 \subset I_2$;
\item {\bf Locality:} If $I_1$ and $I_2$ are disjoint,
then $\cA_0(I_1)\subset \cA_0(I_2)'$;
\item {\bf Conformal covariance:} $\cA_0(\gamma I) = \Ad U_0(\gamma)(\cA_0(I))$ for 
$\gamma \in \diff(S^1)$, and if $I$ is disjoint from $\supp \gamma$, then $\Ad U_0(\gamma)(x) = x$
for $x \in \cA_0(I)$;
\item {\bf Positive energy}: The 
generator 
of the rotation subgroup 
in $\diff(S^1)$ is positive;
\item {\bf Vacuum}: $\Omega_0$ is cyclic for each 
$\cA_0(I)$ and is a unique (up to a scalar)
vector such that $U_0(\gamma)\Omega_0 = \Omega_0$ for $\gamma \in \Mob$;
\end{enumerate}
A two-dimensional conformal net $\cA$ contains both left and right chiral components. 
Indeed, the operators $U(\gamma_0\times \iota), U(\iota\times \gamma_0)$ 
are such elements.
In addition, left and right chiral components commute 
with each other \cite{Rehren00}.

The positive-energy representation $U_0$ is associated with a positive-energy representation
of the Virasoro algebra $\{L_m\}$ \cite[Appendix]{Carpi04}:
\[
 [L_m, L_n] = (m-n)L_{m+n} + \frac{c}{12}m(m^2-1)\delta_{m,-n},
\]
for a certain value $c > 0$ (by an abuse of notations, we use the symbols $\{L_m\}$ both for
abstract Lie algebra elements and for unbounded operators 
satisfying 
the above relations).
The self-adjoint operators $\mathfrak t = \tfrac{1}{2}L_0 - \frac14(L_1 + L_{-1}), 
\mathfrak d = \tfrac{i}{2}(L_{-1} - L_1)$ are the generators
of translations and dilations of 
$\bbR \subset S^1$
(in the sense that $\bbR$ is embedded in $S^1$ by the stereographic projection), 
respectively \cite[Appendix A]{WeinerThesis}.

The two-dimensional conformal group $\conf$ has the tensor product $\{L_m \otimes \bb1, \bb1\otimes L_n\}$
as the Lie algebra. The Lorentz boosts are generated by 
$\mathfrak{k}_1 := \mathfrak d\otimes \bb1 - \bb1\otimes \mathfrak d$,
while the 
rotations of the time-zero circle of the Einstein cylinder (identified with that of the de Sitter space)
are generated by $\mathfrak{k}_0 := L_0\otimes \bb1 - \bb1\otimes L_0$.
Note that the second boost is given by $\mathfrak{k}_2 := i[ \mathfrak{k}_1 ,  \mathfrak{k}_0 ]$. 

For $z \in S^1$, one can consider the operator-valued distribution $T(z) = \sum_n L_n z^{-n}$,
called the Virasoro field (the convention of the exponent is the one such that $T(z)^* = T(z)$ \cite{CTW19},
different from $L(z) = \sum_n L_n z^{-n-2}$ in vertex algebras \cite{Kac98}).
The two-dimensional stress-energy tensor $T^{\mu\nu}$ has four components,
and the fields $T(z)\otimes \bb1 + \bb1\otimes T(z^{-1})$ and $T(z)\otimes \bb1 - \bb1\otimes T(z^{-1})$
correspond to the components of the stress-energy tensor $T^{01}$ and $T^{00}$ \cite{RehrenCQFT}.

\subsection{Perturbation by a local field}\label{perturbation}
As we saw above, a two-dimensional conformal net can be considered as a Haag-Kastler net on $\dS^2$ and hence 
as a starting point for a
new construction
in the sense of \cite{JM18}. We take a starting Haag-Kastler net $\cA$ on $\dS^2$, a unitary representation $U$ of the $(2+1)$-dimensional Lorentz group
and a vacuum vector $\Omega$.
We call ``time-zero wedges'' wedge regions whose end points reside on $S^1$.

The general strategy of \cite[Theorem 4.1]{JM18} goes as follows:
\begin{itemize}
\item Assume there is a rotation-invariant vector $\Omega$ cyclic for $\cA(W_1)$ in 
the natural positive cone $\cP( \cA(W_1), \Omega_0)$ associated with the pair $\cA (W_1) $ and the free 
vacuum vector $\Omega_0$. The rotation-invariance of $\Omega$ implies that 
$\Omega \in \cP( \cA(W), \Omega_0)$ for all time-zero wedges $W$.

\item Assume there exists a new 
(interacting) 
representation $\widetilde U$ of the Lorentz 
group such that 
\begin{itemize}
\item its restriction to the rotation subgroup coincides with that of $U$; 
\item its (interacting) Lorentz boost associated with the wedge $W_1$
is implemented by the modular group for the pair $\cA(W_1)$ 
and (the interacting vacuum vector) $\Omega$;
\item it satisfies the finite speed of light 
condition \cite[Definition 3.3]{JM18}, which roughly says that
the action of $\widetilde U$ on the 
time-zero algebras $\cA(O_I)$, with $O_I$ a double cone given by the intersection 
of time-zero wedges, preserves locality.
\end{itemize}

The last two assumptions should be satisfied automatically if the new 
representation $\widetilde U$ is generated from a local field, as below\footnote{For this implication,
Trotter's product formula was used in \cite[Theorem 10.1.1]{BJM23}. For this, it is necessary
that the new generators are essentially self-adjoint on the same domain. Otherwise, new techniques would be needed.}.
\item The Lorentz covariance of the new net is given by $\widetilde U$. For any 
wedge region $W$, $\widetilde \cA(W)
:= \Ad \widetilde U(g)(\cA(W_1))$ 
(by definition), 
where $g$ is such that $gW_1 = W$
(this is well-defined by finite speed of propagation, in particular,
the free time-zero wedge algebras are preserved by the new boosts). 
Any double cone is written 
as $O = \bigcap_{W \supset O} W$, and accordingly
we define $\widetilde \cA(O) = \bigcap_{W \supset O} \widetilde \cA(W)$. 
This satisfies locality again by finite speed of propagation.
\end{itemize}

Then $(\widetilde \cA, \widetilde U, \widetilde \Omega)$ is a new Haag-Kastler net on $\dS^2$.
This construction avoids Haag's theorem \cite{Weiner11}, because the spacetime is compact and there is no
dilation covariance that pushes a double cone to infinity.

Assume that the net $\cA$ is generated by a conformal Wightman field $\psi$
and it has a well-defined restriction to 
the $x_0 = 0$ circle, which we denote by $\psi(0,\theta)$.
The field $\psi(0,\theta)$ smeared by a test function $f$ on the circle is
denoted by $\psi(0,f)$.
Let $\mathfrak e_n(\theta) = e^{in\theta}$. 
We find interesting candidates for such $\widetilde U$ by adding 
$\psi(0,\mathfrak e_1), \psi(0,\mathfrak e_{-1})$
to 
the building blocks 
$\mathfrak{l}_1 := L_1 \otimes \bb1 + \bb1 \otimes L_{-1}, 
\mathfrak{l}_{-1} :=L_{-1}\otimes \bb1 + \bb1 \otimes L_1$
of the generators 
\begin{align*}
	\mathfrak{k}_1 = \tfrac{1}{2} (\mathfrak{l}_1 + \mathfrak{l}_{-1}) , 
	\qquad 
	\mathfrak{k}_2 = \tfrac{1}{2i} (\mathfrak{l}_1 - \mathfrak{l}_{-1}) , 
\end{align*}
of the Lorentz boosts and $\psi(0,\mathfrak e_0)$ 
to the generator of the rotations 
$L_0 \otimes \bb1 - \bb1 \otimes L_0$ leaving 
the $x_0 = 0$ circle invariant. By definition, 
$\mathfrak{l}_1 = \mathfrak{k}_1 + i \mathfrak{k}_2$  and 
$\mathfrak{l}_{-1} = \mathfrak{k}_1 - i \mathfrak{k}_2 $, 
and consequently, $\mathfrak{l}_1^* = \mathfrak{l}_{-1}$.
% and $\mathfrak{l}_{-1}^* = \mathfrak{l}_1$. 

\bigskip
Below we take concrete examples of two-dimensional CFTs
and a candidate for $\widetilde U$ using the charged fields in it.

\section{The \texorpdfstring{$\rU(1)$}{U(1)}-current and its two-dimensional extension}
\label{u1current}

\subsection{Chiral components}
\label{sec:3.1}
To make the programme concrete, we consider a
two-dimensional CFT,
whose chiral components are the $\rU(1)$-current nets
on $S^1$ \cite{BMT88}.
It is generated by the current (the derivative of the massless free field) $J(z) = \sum_n J_n z^{-n-1}$,
where $z \in S^1$, and its Fourier coefficients $J_n$ satisfy the commutation relations
\begin{align}\label{eq:current}
 [J_m, J_n] = n\delta_{m,-n}.
\end{align}
There is a representation of this algebra with a unique \textbf{vacuum vector} $\Omega_0$ such that
$J_n \Omega_0 = 0$ for $n \ge 0$. 
The Hilbert space $\cH_0$ is spanned by the vectors of the form
\[
 J_{-n_1}\cdots J_{-n_k}\Omega_0,
\]
where $n_1 \ge n_2 \ge \cdots \ge n_k$.
We denote by $\cH_0^{\fin}$ the linear span of these vectors.
$\cH_0$ is equipped with an inner product
$\<\cdot, \cdot\>$ with respect to which it holds that $J_n^* = J_{-n}$.

This current can be smeared by a smooth function $f$ and gives an unbounded operator $J(f) = \sum_n f_n J_n$,
where $f_n = \frac1{2\pi}\int e^{-in\theta} f(\theta)d\theta$ are the Fourier components.
The exponential $W(f) = e^{iJ(f)}$ is called a Weyl operator.
One can construct a conformal net on $S^1$ by $\cA_0(I) = \{e^{iJ(f)}: \supp f \subset I\}''$.
A representation of the Virasoro algebra $\{L_n\}$ is given by the Sugawara formula
$L_n = \frac12 \sum_k :J_{n-k} J_k:$, and it integrates to a projective unitary representation $U_0$ of $\diff(S^1)$.
With this $U_0$, $(\cA_0, U_0, \Omega_0)$ is called the \textbf{$\rU(1)$-current net}.

A \textbf{representation} of a conformal net $\cA_0$ on $S^1$ is a family of isomorphisms $\{\rho_I\}$
of local algebras $\{\cA_0(I)\}$ to von Neumann
algebras on a certain Hilbert space $\cH_\rho$
that satisfy the compatibility condition
\[
 \rho_{I_1}(a) 
= \rho_{I_2}(a) \qquad \text{for } 
a \in \cA(I_1), \quad I_1 \subset I_2.
\]

For each $\alpha \in \bbR$, the $\rU(1)$-current net admits a representation $\rho_\alpha$.
This representation $\rho_\alpha$ can be realized on the same Hilbert space $\cH_0$ as the vacuum representation,
but to distinguish them we denote it by $\cH_\alpha$ and the lowest weight vector by $\Omega_\alpha$.
The assignment $W(f) \mapsto \rho_\alpha(W(f)) = e^{i\alpha \int f(\theta)d\theta}W(f)$
gives the representation.
This is an \textbf{automorphism} of each local algebra $\cA_0(I)$.
In terms of generators, it amounts to 
replacing $J_0$ (which acts as the 
zero operator 
in the vacuum representation)
by the scalar $\alpha$. We denote the generator of the current algebra \eqref{eq:current} on this space by $J_{\alpha,n}$.
By the Sugawara formula 
$L_{\alpha,n} = \frac12 \sum_k :J_{\alpha, n-k} J_{\alpha,k}:$,
there is also a representation of the Virasoro algebra with the same central charge $c=1$.
%In particular, $J_{\alpha,n}$ restrict to the representation 
%$\rho_\alpha$ on $\cH_\alpha$. 
We denote by $\cH_\alpha^{\fin}$ the 
subspace spanned by vectors of the form
$J_{\alpha,-n_1}\cdots J_{\alpha,-n_k}\Omega_\alpha$.

Two such automorphisms $\rho_{\alpha_1}, \rho_{\alpha_2}$ can be composed,
and yield a new automorphism (representation) $\rho_{\alpha_1 + \alpha_2}$.
This composition law is called the fusion rule for the $\rU(1)$-current.
For a fixed $\alpha_0 \in \bbR$, the family $\{\rho_{j\alpha_0}\}_{j\in \bbZ}$ 
on $\{\cH_{j\alpha_0}\}$ is closed under fusion, indeed, 
$\rho_{j_1\alpha_0}\rho_{j_2\alpha_0} = \rho_{(j_1+j_2)\alpha_0}$.

\subsection{Charged primary fields}
Let us fix $\alpha_0 \in \bbR$ with $|\alpha_0| \le 1, \alpha_0 \neq 0$. We now
construct\footnote{The case $\alpha=0$ is possible but uninteresting because $Y_0(z) = \bb1$, thus we do not consider it although not explicitly excluded.}
for $\alpha \in \alpha_0\bbZ$, on a dense domain in the Hilbert space
$\hat \cH = \bigoplus_{j \in \bbZ} \cH_{j\alpha_0}$,
a formal series\footnote{Recall \cite{AGT23Pointed} that we use formal series 
$\sum_{s \in \bbR} A_s z^s$ for a family of operators $\{A_s\}_{s \in \bbR}$
(actually the formal series is just the parametrized family $\{A_s\}$ itself, 
but certain operations on them
are implicit).} by
	\begin{align}
  		Y_\alpha(z) &= \sum_{s \in \bbR} Y_{\alpha, s}z^{-s-d},
	\end{align}
where $d = \frac{\alpha^2}2$, as follows.
On $\hat \cH$, the operators 
	\[
		\hat J_n = \bigoplus_{j\in \bbZ} J_{j\alpha_0, n} \, , 
		\qquad \hat L_n = \bigoplus_{j\in \bbZ} L_{j\alpha_0, n} 
	\]
can be defined naturally (depending on $\alpha_0$, they act on a Hilbert space $\hat \cH = \bigoplus_{j \in \bbZ} \cH_{j\alpha_0}$
which also depends implicitly on $\alpha_0$).
Let $c_\alpha$ be the unitary charge shift
operator $\cH_\beta \to \cH_{\beta+\alpha}$ defined
by $c_\alpha \hat J_{-n_1}\cdots \hat J_{-n_k}\Omega_{\beta} = \hat J_{-n_1}\cdots \hat J_{-n_k}\Omega_{\beta + \alpha}$, $n_j > 0$.
We can regard $c_\alpha$ as an operator on $\hat \cH = \bigoplus_{j \in \bbZ} \cH_{j\alpha_0}$.
Following \cite{TZ12}, we define
	\begin{align}\label{eq:E}
 		E^\pm(\alpha, z) 
		&= \exp \Bigl(\mp \sum_{n>0} \frac{\alpha \hat J_{\pm n}}n z^{\mp n}\Bigr) . 
	\end{align}
The operators $Y_\alpha(z)$ are now specified by
	\begin{align}
  		Y_\alpha(z) &= c_\alpha E^-(\alpha,z) E^+(\alpha,z) z^{\alpha J_0} ,
		\label{eq:Eaz} 
	\end{align}
where $z^{\alpha J_0}$ means $z^{\alpha \beta}$ on $\cH_\beta$.
Each coefficient $Y_{\alpha, s}$ is the direct sum of maps $\cH_\beta \to \cH_{\beta+\alpha}$
and on each $\cH_\beta, \beta \in \alpha_0\bbZ$, only $Y_{\alpha,s}$ with $s \in \bbZ - \alpha\beta - \frac{\alpha^2}2 = \bbZ - \alpha\beta - d$ are non-zero.

The operators 
$\hat L_n$
generate a projective unitary representation $\hat U_0$ of $\overline{\diff(S^1)}$.
We know from \cite{AGT23Pointed} (see \cite{TZ12} and \cite{Toledano-LaredoThesis} for the original references) that,
if $|\alpha| \le 1$, then $Y_{\alpha}$ satisfies,
with $d = \frac{\alpha^2}2$ as above,
\begin{align}
 [\hat L_m, Y_{\alpha,s}] &= ((d-1)m - s)Y_{\alpha, m+s}, \label{eq:primary} \\
 \|Y_{\alpha,s}\| &\le 1. \label{eq:energybounds}
\end{align}
By \eqref{eq:energybounds}, we can smear the field by a test function supported in $\bbR$
and obtain a bounded operator $Y_{\alpha}(f)$
(even for $|\alpha| > 1$ one can define $Y_{\alpha}(f)$, but they are unbounded).
By \eqref{eq:primary}, it is conformally covariant with respect to $\hat U$ with the conformal dimension $d$.
The formal series $Y_\alpha(z) = \sum_{s \in \bbR} Y_{\alpha, s} z^{-s-d}$ is not local with itself, but satisfies a braiding relation.

From the construction \eqref{eq:Eaz},
it is easy to see the commutation relation
$[\hat J_m, Y_{\alpha}(z)] = \alpha Y_\alpha(z) z^{m}$, or equivalently,
\begin{align}
 [\hat J_m, Y_{\alpha, s}] &= \alpha Y_{\alpha, m+s}. \label{eq:commjy}
\end{align}
This implies that $Y_\alpha(z)$ is relatively local to $J(w)$.

\subsection{Two-dimensional Wightman field}\label{twodim}
From two copies of a conformal nets $\cA_0$ on $S^1$, one can construct a two-dimensional conformal net
by tensor product: $\cA_0(I_+\times I_-) := \cA_0(I_+)\otimes \cA_0(I_-)$.
The unitary representation of $\conf$ is given by $U_0(\gamma_+) \otimes U_0(\gamma_-)$
and the vacuum by $\Omega_0\otimes \Omega_0$.
For $\alpha_0 \in \bbR, |\alpha_0|$ fixed above,
there is an extension of this net on the space $\bigoplus_{j\in\bbZ} \cH_{j\alpha_0} \otimes \cH_{j\alpha_0}$ \cite{MTW18, AGT23Pointed}:
on each direct summand $\cH_{j\alpha_0} \otimes \cH_{j\alpha_0}$,
the tensor product net $\cA_0$ acts by the representation $\rho_{j\alpha_0}\otimes \rho_{j\alpha_0}$.

Furthermore, for $\alpha \in \alpha_0\bbZ$, let $Y_{\alpha, s}$ as above on $\hat \cH = \bigoplus_{j \in \bbZ} \cH_{j\alpha_0}$.
The components $Y_{\alpha, s}\otimes \bb1$ of the charged field $Y_{\alpha}(z)\otimes \bb1$ of the left chiral component
acts on $\hat \cH \otimes \hat \cH$
trivially on the right chiral component.
Similarly, the components $\bb1\otimes Y_{\alpha, s}$ of the charged field $\bb1\otimes Y_{\alpha}(z)$ of the right chiral component
acts on $\hat \cH \otimes \hat \cH$,
trivially on the left chiral component.

For $\alpha \in \alpha_0 \bbZ$, we consider the combined charged field
\[
 \widetilde\psi^\alpha(w,z) = Y_\alpha(w)\otimes Y_\alpha(z) + (Y_\alpha(w)\otimes Y_\alpha(z))^*,
\]
which is a formal series whose coefficients are operators acting on $\hat \cH \otimes \hat \cH$.
Actually, as the components of $Y_\alpha(w)\otimes Y_\alpha(z)$ raise
(respectively the components of $(Y_\alpha(w)\otimes Y_\alpha(z))^*$ lower)
the left and right charges by $\alpha$ at the same time,
they restrict to $\widetilde \cH = \bigoplus_{j\in\bbZ} \cH_{j\alpha_0}\otimes \cH_{j\alpha_0}$.
As we take $|\alpha| \le 1$, by \cite[Theorem 5.9]{AGT23Pointed},
the field $\widetilde\psi^\alpha(w,z)$ is a two-dimensional conformal Wightman field
that generates a two-dimensional Haag-Kastler net. Let us call this net $\widetilde \cA$.

\section{Estimates for the charged fields}\label{estimates}
As we wish to perturb the conformal net by a charged field, we are interested in the time-zero restriction of
$\widetilde\psi^\alpha(w,z) = Y_\alpha(w)\otimes Y_\alpha(z) + (Y_\alpha(w)\otimes Y_\alpha(z))^*$.
This amounts to taking
$z = w^{-1}$. However, it is \textit{a priori} unclear whether this is possible, because
taking $z= w^{-1}$ should give a formal series of $z$ alone, but each component is an infinite sum
of components of $\widetilde\psi^\alpha(w,z)$:
\begin{align*}
 Y_\alpha(w)\otimes Y_\alpha(z) = \sum_{s \in \bbR} Y_{\alpha, s}w^{-s-d} \otimes \sum_{t \in \bbR} Y_{\alpha, t}z^{-t-d}
\end{align*}
Therefore, by substituting $z = w^{-1}$,
\begin{align*}
Y_\alpha(w)\otimes Y_\alpha(w^{-1})
&= \sum_{s \in \bbR} Y_{\alpha, s}w^{-s-d} \otimes \sum_{t \in \bbR} Y_{\alpha, t}w^{t+d} \\
 &= \sum_{s \in \bbR} \sum_{t \in \bbR} Y_{\alpha, t} \otimes  Y_{\alpha, t-s}w^{-s} \\
\end{align*}
and we have to make sure that
the sum $\sum_{t\in\bbR} Y_{\alpha, t}\otimes Y_{\alpha, t-s}$ (the sum is countable on each $\cH_{j\alpha_0}\otimes \cH_{j\alpha_0}$)
gives a finite result on a certain dense domain.
As the dense domain, we take
$\bigoplus_{j,\mathrm{alg}}\cH_{j\alpha_0}^{\fin}\otimes_\mathrm{alg} \cH_{j\alpha_0}^{\fin}$
where $\bigoplus_{j,\mathrm{alg}}$ denotes the algebraic direct sum and $\otimes_\mathrm{alg}$
denotes the algebraic tensor product.
We show that the above sum is convergent if $|\alpha| < \frac1{\sqrt 2}$.

For this purpose,
we need a general result on primary fields with conformal weight $d$ \cite[Appendix B (141)]{CKLW18}.
This is proven for fields with integer $d$, but it is straightforward to generalize it
(because one only needs the primarity). It states that
\begin{align}\label{eq:decay}
 \|Y_{\alpha, -n-d}\Omega\|^2 = \binom{2d+n-1}{n} = \frac{\Gamma(2d+n)}{\Gamma(n+1)\Gamma(2d)}\sim n^{2d-1},
\end{align}
where the last asymptotic follows from the Stirling's approximation of the Gamma function $\Gamma(x)$ with complex variable
\cite[12.33]{WW21}.

We first observe that, for $|\alpha| \ge \frac1{\sqrt 2}$, $d = \frac{\alpha^2}2 \ge \frac14$, thus
$\sum_{n\in\bbZ} Y_{\alpha, -n}\otimes Y_{\alpha, -n-s}$ does not converge on $\Omega_0\otimes \Omega_0$.
Indeed,
$\|Y_{\alpha, -n-d}\otimes Y_{\alpha, -n-d-s}\cdot \Omega_0\otimes\Omega_0\|^2 = \binom{2d+n-1}{n}\binom{2d+n+s-1}{n+s} \sim n^{4d-2}$
(for a fixed $s$, as $n \to \infty$) and these vectors are orthogonal to each other,
hence the sum $\sum_{n\in\bbZ} Y_{\alpha, -n-d}\otimes Y_{\alpha, -n-d-s}$ diverges on $\Omega_0\otimes\Omega_0$.
This means that, if $|\alpha| \ge \frac1{\sqrt 2}$, there is no hope 
to define an operator of the form
$\sum_n Y_{\alpha, -n-d}\otimes Y_{\alpha, -n-d-s}$ 
on a domain containing $\Omega_0\otimes \Omega_0$.

On the other hand, if $d < \frac14$, there is still hope that we can 
carry through the general programme of Section \ref{perturbation}.

\begin{theorem}
 Let\footnote{As $\alpha \in \alpha_0 \bbZ$, we can take such an $\alpha$ if $|\alpha_0| < \frac1{\sqrt2}$.} $|\alpha| < \frac1{\sqrt 2}$.
 Then each coefficient of $w^s$ in the formal series $\widetilde\psi^\alpha(w,w^{-1})$,
 applied to any vector in $\bigoplus_{j,\mathrm{alg}}\cH_{j\alpha_0}^{\fin}\otimes_\mathrm{alg} \cH_{j\alpha_0}^{\fin}$,
 is convergent.
\end{theorem}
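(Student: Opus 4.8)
The statement reduces to showing that on a vector $\xi\otimes\eta$ with $\xi,\eta\in\cH_{j\alpha_0}^{\fin}$, the sum defining the $w^s$-coefficient of $Y_\alpha(w)\otimes Y_\alpha(w^{-1})$, namely $\sum_{t} Y_{\alpha,t}\xi\otimes Y_{\alpha,t-s}\eta$, converges in $\cH_{(j+1)\alpha_0}\otimes\cH_{(j+1)\alpha_0}$ (and similarly for the adjoint term, which lands in $\cH_{(j-1)\alpha_0}\otimes\cH_{(j-1)\alpha_0}$). Since $\xi,\eta$ are finite energy vectors, all summands live in a Hilbert space and it suffices to bound $\sum_t \|Y_{\alpha,t}\xi\|\cdot\|Y_{\alpha,t-s}\eta\|$; by Cauchy--Schwarz it is enough to control $\sum_t \|Y_{\alpha,t}\xi\|^2$ for an arbitrary finite-energy $\xi$, uniformly as $t\to+\infty$ (for fixed $s$, only one tail matters, say $t\to+\infty$, because $Y_{\alpha,t-s}\eta$ vanishes unless $t-s$ is below the relevant energy gap — no, both tails appear, so one needs both $\sum_t\|Y_{\alpha,t}\xi\|^2<\infty$ and the analogous bound with $t$ replaced by $t-s$; these are the same kind of estimate).

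The key input is the asymptotic \eqref{eq:decay}, $\|Y_{\alpha,-n-d}\Omega\|^2\sim n^{2d-1}$, which for $d<\frac14$ gives an exponent $2d-1<-\frac12$, so $\sum_n \|Y_{\alpha,-n-d}\Omega\|^2<\infty$. The plan is to promote this from the vacuum $\Omega$ to an arbitrary $\xi\in\cH_{j\alpha_0}^{\fin}$. I would write $\xi$ as a finite linear combination of vectors $\hat J_{-n_1}\cdots\hat J_{-n_k}\Omega_{j\alpha_0}$, and then use the commutation relation \eqref{eq:commjy}, $[\hat J_m,Y_{\alpha,s}]=\alpha Y_{\alpha,m+s}$, to push each $\hat J_{-n_i}$ past $Y_{\alpha,t}$: each commutator replaces $Y_{\alpha,t}$ by $\alpha Y_{\alpha,t-n_i}$, so $Y_{\alpha,t}\,\hat J_{-n_1}\cdots\hat J_{-n_k}\Omega$ becomes a finite sum (over subsets of $\{1,\dots,k\}$) of terms of the form $(\text{combinatorial factor})\cdot \hat J_{-n_{i_1}}\cdots\hat J_{-n_{i_\ell}}\,Y_{\alpha,t-(n_{j_1}+\cdots+n_{j_m})}\Omega$. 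Using $\|\hat J_{-n}\zeta\|\le \sqrt{n+N}\,\|\zeta\|$-type energy bounds on the finitely many remaining currents (the total energy added is bounded by a constant depending only on $\xi$, not on $t$), one gets $\|Y_{\alpha,t}\xi\|\le C_\xi\sum_{\text{(finitely many shifts }r)} \|Y_{\alpha,t-r}\Omega\|$, and squaring and summing over $t$ reduces back to $\sum_n \|Y_{\alpha,-n-d}\Omega\|^2<\infty$ from \eqref{eq:decay}. (One should check that for $\xi\in\cH_{j\alpha_0}^{\fin}$ with $j\neq 0$, the analogue of \eqref{eq:decay} still holds — it does, since $\rho_{j\alpha_0}$ is just the vacuum sector with $J_0$ shifted, so $Y_{\alpha,s}$ acting on $\Omega_{j\alpha_0}$ has the same norm structure up to the shift $s\in\bbZ-\alpha(j\alpha_0)-d$; alternatively apply \eqref{eq:commjy} to relate $Y_{\alpha,s}\Omega_{j\alpha_0}$ to $Y$ acting on the vacuum.)

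Assembling: once $A_\xi:=\bigl(\sum_t\|Y_{\alpha,t}\xi\|^2\bigr)^{1/2}<\infty$ and $B_\eta:=\bigl(\sum_t\|Y_{\alpha,t}\eta\|^2\bigr)^{1/2}<\infty$ are established, Cauchy--Schwarz gives $\sum_t\|Y_{\alpha,t}\xi\|\,\|Y_{\alpha,t-s}\eta\|\le A_\xi B_\eta<\infty$, so the series $\sum_t Y_{\alpha,t}\xi\otimes Y_{\alpha,t-s}\eta$ is absolutely convergent in the Hilbert space tensor product, hence convergent; the same argument handles the adjoint summand $(Y_\alpha(w)\otimes Y_\alpha(z))^*$, noting $Y_{\alpha,s}^*$ is (up to a phase and index reflection) a charged field component of the conjugate charge $-\alpha$ with the same conformal dimension $d$, so \eqref{eq:decay} applies verbatim. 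Extending from a single summand $\cH_{j\alpha_0}^{\fin}\otimes_{\mathrm{alg}}\cH_{j\alpha_0}^{\fin}$ to a general element of the algebraic direct sum is immediate by linearity since only finitely many $j$ occur. The main obstacle is the bookkeeping in the second paragraph: controlling, uniformly in $t$, the norm of $Y_{\alpha,t}$ applied to a fixed finite-energy vector by reducing to the vacuum estimate via the current commutation relation, and making sure the finitely many "residual current" factors contribute only a $t$-independent constant.
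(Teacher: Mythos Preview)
Your reduction via the commutation relation \eqref{eq:commjy} to the vacuum estimate is exactly the mechanism the paper uses, and that part of your plan is sound. The gap is in the convergence step. You claim that $\sum_n \|Y_{\alpha,-n-d}\Omega\|^2<\infty$ because for $d<\tfrac14$ the exponent $2d-1$ is less than $-\tfrac12$; but $\sum_n n^{p}$ converges only for $p<-1$, and here $2d-1>-1$ for every $d>0$. So $A_\xi=\bigl(\sum_t\|Y_{\alpha,t}\xi\|^2\bigr)^{1/2}=\infty$ already on the vacuum, and the Cauchy--Schwarz bound $\sum_t\|Y_{\alpha,t}\xi\|\,\|Y_{\alpha,t-s}\eta\|\le A_\xi B_\eta$ is vacuous. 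In fact the norms themselves satisfy $\|Y_{\alpha,-n-d}\Omega\|\,\|Y_{\alpha,-n-d-s}\Omega\|\sim n^{2d-1}$, which is again not summable, so absolute convergence in the sense you set up simply fails.

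What rescues the argument is orthogonality, which you do not invoke. For $\xi,\eta$ that are $\hat L_0$-eigenvectors, the vectors $Y_{\alpha,t}\xi\otimes Y_{\alpha,t-s}\eta$ lie in distinct $(\hat L_0\otimes\bb1)$-eigenspaces as $t$ varies, hence are mutually orthogonal. Convergence of $\sum_t Y_{\alpha,t}\xi\otimes Y_{\alpha,t-s}\eta$ is then equivalent to
\[
\sum_t \|Y_{\alpha,t}\xi\|^2\,\|Y_{\alpha,t-s}\eta\|^2<\infty,
\]
and your commutator reduction (correctly) gives the pointwise bound $\|Y_{\alpha,-n-d}\xi\|^2\le C_\xi\,(n+C)^{2d-1}$, so the sum behaves like $\sum_n n^{4d-2}$, which converges precisely when $d<\tfrac14$. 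This is how the threshold $|\alpha|<\tfrac1{\sqrt2}$ actually enters. A general finite-energy $\xi$ is a finite sum of $\hat L_0$-eigenvectors, so one reduces to that case. The paper's proof follows exactly this route: pointwise estimate via the $[\hat J_m,Y_{\alpha,s}]$ commutator, then summability of the \emph{squared} norms of the tensor-product terms.
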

\begin{proof}
We note first that a general vector in one tensor component is a linear combination of
$c_{j\alpha}\hat J_{-m_1}\cdots \hat J_{-m_k}\Omega_0$, where $m_\ell > 0$ and $c_{j\alpha}$ commutes with  $\hat J_m, m \neq 0$.
Therefore, it is enough to prove the convergence on vectors
in $\cH_{0}^{\fin}\otimes_\mathrm{alg} \cH_{0}^{\fin}$.

We claim that $\|Y_{\alpha, -n-d}\hat J_{-m_1}\cdots \hat J_{-m_k}\Omega_0\|^2 \le C_1(n+C_2)^{4d-2}$
where $C_1, C_2$ depend on the vector but not on $n$.
This is clear for $k=0$.
To prove the claim by induction on $k$, let us observe that $Y_{\alpha, -n-d}\hat J_{-m_1}\cdots \hat J_{-m_k}\Omega_0$ can be reduced, using
the commutation relations \eqref{eq:commjy}, $[Y_{\alpha, -n-d}, \hat J_m] = -\alpha Y_{\alpha, -n+m-d}$,
as follows:
\begin{align*}
 Y_{\alpha, -n-d}\hat J_{-m_1}\cdots \hat J_{-m_k}\Omega_0
 &= ([Y_{\alpha, -n-d}, \hat J_{-m_1}] + \hat J_{-m_1}Y_{\alpha, -n-d})\hat J_{-m_2}\cdots \hat J_{-m_k}\Omega_0 \\
 &= (-\alpha Y_{\alpha, -n+m_1-d} + \hat J_{-m_1}Y_{\alpha, -n-d})
 \hat J_{-m_2}\cdots \hat J_{-m_k}\Omega_0 \, . 
\end{align*}
Using $\|\Psi_1 + \Psi_2\|^2 \le 2(\|\Psi_1\|^2 + \|\Psi_2\|^2)$,
it is enough to show that the norm of each term decays as desired.
The first term decays by the induction hypothesis.
Let us calculate the norm of the second term:
\begin{align*}
 & \|\hat J_{-m_1}Y_{\alpha, -n-d}\hat J_{-m_2}\cdots \hat J_{-m_k}\Omega_0\|^2 \\
 &= \<Y_{\alpha, -n-d}\hat J_{-m_2}\cdots \hat J_{-m_k}\Omega_0, ([\hat J_{m_1}, \hat J_{-m_1}] + \hat J_{-m_1}\hat J_{m_1})Y_{\alpha, -n-d}\hat J_{-m_2}\cdots \hat J_{-m_k}\Omega_0\> \\
 &= \<Y_{\alpha, -n-d}\hat J_{-m_2}\cdots \hat J_{-m_k}\Omega_0, (m_1 + \hat J_{-m_1}\hat J_{m_1})Y_{\alpha, -n-d}\hat J_{-m_2}\cdots \hat J_{-m_k}\Omega_0\> \\
 &= m_1\|Y_{\alpha, -n-d}\hat J_{-m_2}\cdots \hat J_{-m_k}\Omega_0\|^2 + \|\hat J_{m_1}Y_{\alpha, -n-d}\hat J_{-m_2}\cdots \hat J_{-m_k}\Omega_0\|^2 \\
 &= m_1\|Y_{\alpha, -n-d}\hat J_{-m_2}\cdots \hat J_{-m_k}\Omega_0\|^2 + \|(\alpha Y_{\alpha, -n+m_1-d} + Y_{\alpha, -n-d}\hat J_{m_1})\hat J_{-m_2}\cdots \hat J_{-m_k}\Omega_0\|^2 \\
 &\le m_1\|Y_{\alpha, -n-d}\hat J_{-m_2}\cdots \hat J_{-m_k}\Omega_0\|^2 \\
 & \qquad +2 \|\alpha Y_{\alpha, -n+m_1-d}\hat J_{-m_2}\cdots \hat J_{-m_k}\Omega_0\|^2 + 2\|Y_{\alpha, -n-d}\hat J_{m_1}\hat J_{-m_2}\cdots \hat J_{-m_k}\Omega_0\|^2  \, . 
\end{align*}
The last term can be reduced, 
using $[\hat J_m, \hat J_n] = m\delta_{m,-n}$ and $\hat J_m\Omega_0$ (as $m>0$),
to a sum of norms of vectors of the above form. This completes the induction.
That is, the norm of $Y_{\alpha, -n-d}\hat J_{-m_1}\cdots \hat J_{-m_k}\Omega_0$
is a linear combination of terms that decay like $(n+C_2)^{2d-1}$.

When the operator $Y_{\alpha, -n-d}\otimes Y_{\alpha, -n-d-s}$ for a fixed $s$ is
applied to a vector that is the tensor product of two such vectors,
the norm decays as $C_1(n+C_2)^{4d-2}$, which is summable in $n$.
Therefore, this operator is defined on
$\bigoplus_{j,\mathrm{alg}}\cH_{j\alpha_0}^{\fin}\otimes_\mathrm{alg} \cH_{j\alpha_0}^{\fin}$.
\end{proof}

\section{Commutativity of the time-zero charged field}\label{comm}
Now we know that, for $|\alpha| < \frac1{\sqrt 2}$, $Y_{\alpha}(w)\otimes Y_{\alpha}(w^{-1})$ makes sense as a formal series
whose coefficients are (unbounded) operators on the dense domain $\bigoplus_{j,\mathrm{alg}}\cH_{j\alpha_0}^{\fin}\otimes \cH_{j\alpha_0}^{\fin}$.
Next we show that it is not only local but also commutative, and moreover,
$Y_{\alpha}(w)\otimes  Y_{\alpha}(w^{-1})$ and $Y_{\beta}(z)\otimes  Y_{\beta}(z^{-1})$ commute for possibly different $\alpha, \beta \in \alpha_0\bbZ$.
More precisely, we have the following result.
\begin{theorem}\label{th:commutativity}
As formal series, it holds that $(Y_{\alpha}(w)\otimes  Y_{\alpha}(w^{-1})^* = Y_{-\alpha}(w)\otimes  Y_{-\alpha}(w^{-1})$
under the convention $w^* = w^{-1}$ and
$Y_{\alpha}(w)\otimes  Y_{\alpha}(w^{-1})$ and $Y_{\beta}(z)\otimes  Y_{\beta}(z^{-1})$
commute on the domain $\bigoplus_{j,\mathrm{alg}}\cH_{j\alpha_0}^{\fin}\otimes_\mathrm{alg} \cH_{j\alpha_0}^{\fin}$
weakly, that is, as sesquilinear forms.
\end{theorem}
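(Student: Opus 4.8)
The plan is to establish the two claims separately. For the adjoint identity, I would compute $(Y_\alpha(w)\otimes Y_\alpha(w^{-1}))^*$ coefficient-by-coefficient. Using the explicit construction $Y_\alpha(z) = c_\alpha E^-(\alpha,z)E^+(\alpha,z)z^{\alpha J_0}$, one has $c_\alpha^* = c_{-\alpha}$, $(E^\pm(\alpha,z))^* = E^\mp(\alpha,\bar z)$, and with the convention $w^* = w^{-1}$ one checks that $(z^{\alpha J_0})^* = z^{-\alpha J_0}$ evaluated appropriately. The factor $c_\alpha$ must be moved past the $E^\pm$ factors, picking up a scalar from $[J_0,\cdot]$ which, combined with the charge-dependent exponents in $z^{\alpha J_0}$, conspires to reproduce exactly $Y_{-\alpha}(w)\otimes Y_{-\alpha}(w^{-1})$. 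I would phrase this as a formal-series manipulation where $w^* = w^{-1}$ is interpreted by the substitution $z \mapsto z^{-1}$ together with complex conjugation of scalar coefficients. This step is essentially bookkeeping with the vertex-operator exponentials and poses no real difficulty.

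For the commutativity claim, the strategy is to reduce it to the known braiding relations for the chiral field $Y_\alpha(z)$. The two-dimensional field $Y_\alpha(w)\otimes Y_\alpha(w^{-1})$ acting on $\widetilde\cH$ has, as its "left" part, the chiral field $Y_\alpha(w)$ acting on $\hat\cH$, and as its "right" part, $Y_\alpha(w^{-1})$ acting on the second $\hat\cH$. From \cite{AGT23Pointed}, $Y_\alpha(w)$ and $Y_\beta(z)$ satisfy a braiding (commutation-up-to-phase) relation of the form $Y_\alpha(w)Y_\beta(z) = e^{i\pi\varepsilon\alpha\beta}Y_\beta(z)Y_\alpha(w)$ (as an analytic continuation / formal series identity, with a definite sign $\varepsilon$ depending on the order of $|w|$ vs $|z|$, or more precisely on the branch/argument). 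The key point is that the phase $e^{i\pi\varepsilon\alpha\beta}$ acquired by the left components at $(w,z)$ is the \emph{inverse} of the phase acquired by the right components at $(w^{-1},z^{-1})$, because reciprocating the variables reverses the relevant ordering/orientation. Hence the two phases cancel and the tensor-product fields commute.

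Concretely, the key steps in order are: (i) recall from \cite{AGT23Pointed} the precise braiding relation for $\{Y_\alpha\}$ on $\hat\cH$, keeping careful track of the monodromy phase as a function of $\alpha,\beta$ and of the configuration of the two points; (ii) observe that on $\widetilde\cH = \bigoplus_j \cH_{j\alpha_0}\otimes\cH_{j\alpha_0}$ the left and right charges are locked together, so on each summand the phase from the left tensor factor is $e^{i\pi\varepsilon\alpha\beta}$ and from the right factor is $e^{-i\pi\varepsilon\alpha\beta}$; (iii) multiply and conclude the commutator of $Y_\alpha(w)\otimes Y_\alpha(w^{-1})$ with $Y_\beta(z)\otimes Y_\beta(z^{-1})$ vanishes as a formal series in $w,z$; (iv) use the adjoint identity from the first part to extend commutativity from $Y_\alpha\otimes Y_\alpha$ to the full symmetric field $\widetilde\psi^\alpha$, noting $Y_{-\alpha}$ braids with $Y_\beta$ by the conjugate phase, which again cancels; (v) check that all manipulations are legitimate on the domain $\bigoplus_{j,\mathrm{alg}}\cH_{j\alpha_0}^{\fin}\otimes_\mathrm{alg}\cH_{j\alpha_0}^{\fin}$ as sesquilinear forms, using the convergence established in the previous theorem so that each matrix element is an absolutely convergent sum that can be rearranged.

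**The main obstacle** will be step (ii): making precise in what sense the right chiral factor, evaluated at the reciprocal variable $w^{-1}$ rather than $w$, acquires the \emph{opposite} monodromy. In the formal-series framework (as opposed to genuine operator-valued distributions with analytic continuation), the "braiding relation" is itself a statement about how one reorders a doubly-indexed formal sum, and one must verify that substituting $z \mapsto z^{-1}$ reverses the role of the two factors in exactly the way that flips the sign of the phase. I expect this amounts to a careful analysis of the sign of $\arg w - \arg z$ versus $\arg w^{-1} - \arg z^{-1} = -(\arg w - \arg z)$, i.e.\ the reciprocal map reverses orientation on the circle, which is geometrically the statement that spacelike separation on the time-zero circle of $\dS^2$ is symmetric; but turning this into a clean formal-series identity, with all the $c_\alpha$ charge-shifts and $z^{\alpha J_0}$ factors tracked, is where the real work lies. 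A secondary subtlety is ensuring the sesquilinear-form identity genuinely holds on the stated domain and not merely formally — here one invokes the norm estimates from Section \ref{estimates} to justify interchanging the (infinite) sums defining the two composed fields.
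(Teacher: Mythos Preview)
Your treatment of the adjoint identity matches the paper's. For commutativity, however, your route is genuinely different from the paper's and carries a gap you correctly identify but do not close.

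The phase-cancellation mechanism you describe is precisely what underlies the \emph{two-dimensional} spacelike locality of $\widetilde\psi^\alpha(w_1,z_1)$ with $\widetilde\psi^\beta(w_2,z_2)$ established in \cite{AGT23Pointed}, for independent $(w_1,z_1),(w_2,z_2)$. The difficulty is that the chiral braiding relation $Y_\alpha(w)Y_\beta(z)=e^{i\pi\varepsilon\alpha\beta}Y_\beta(z)Y_\alpha(w)$ is \emph{not} an identity of formal series: the two sides live in different formal completions (one expanded in $z/w$, the other in $w/z$), and the phase appears only upon analytic continuation across $w=z$. Passing to the time-zero restriction and asking the commutator to vanish coefficient-by-coefficient is strictly stronger than spacelike locality away from the diagonal: it requires ruling out contact terms supported at $w=z$, and nothing in the braiding data alone does this. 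Your step~(ii) would have to supply that argument, and ``reciprocation reverses orientation'' is a geometric heuristic, not a formal-series identity.

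The paper bypasses braiding entirely. Using the Jacobi identity with $\hat J_m\otimes\bb1$ and $\bb1\otimes\hat J_m$, it shows that commuting $[Y_\alpha(w)\otimes Y_\alpha(w^{-1}),\,Y_\beta(z)\otimes Y_\beta(z^{-1})]$ with any current mode returns the same commutator times a scalar; together with commutation with $c_\alpha\otimes c_\alpha$, this reduces the full claim to the single vacuum matrix element. Stripping off the shift operators and the $z^{\alpha J_0}$ factors (which cancel between the two tensor legs), one is left with $\widetilde{\underline Y}_\alpha(z)=\underline Y_\alpha(z)\otimes\underline Y_\alpha(z^{-1})$ where $\underline Y_\alpha=E^-E^+$. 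Two symmetries then finish the job: the flip $F$ between tensor factors gives $\widetilde{\underline Y}_{\alpha,m}\,\Omega_0\otimes\Omega_0=F\,\widetilde{\underline Y}_{\alpha,-m}\,\Omega_0\otimes\Omega_0$, and the vacuum-preserving automorphism $J_m\mapsto-J_m$ sends $\widetilde{\underline Y}_{\alpha,m}$ to $\widetilde{\underline Y}_{-\alpha,m}$. Since the vacuum inner products involved are manifestly real (the $E^\pm$ have real coefficients), the commutator on the vacuum becomes the difference of a real number and its complex conjugate, hence zero. This argument lives entirely at the level of sesquilinear forms on the stated domain and needs no analytic continuation, which is what your plan is missing.
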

\begin{proof}
The first claim follows easily from the definitions \eqref{eq:E}\eqref{eq:Eaz} and $c_\alpha^* = c_{-\alpha}$.

Let us denote by $\hat J_m\otimes \bb1$ the operator on 
$\bigoplus_{j,\mathrm{alg}}\cH_{j\alpha_0}^{\fin}\otimes_\mathrm{alg} \cH_{j\alpha_0}^{\fin}$
that acts as $\hat J_{j\alpha, m}\otimes \bb1$ on each component $\cH_{j\alpha_0}^{\fin}\otimes_\mathrm{alg} \cH_{j\alpha_0}^{\fin}$.
We calculate the commutator as a sesquilinear form, that is, we apply the operators to a vector
and take a scalar product with another vector, but we omit them.
We have $[\hat J_m, Y_{\alpha, s}] = \alpha Y_{\alpha, m+s}$.
In terms of formal series, this amounts to $[Y_\alpha(w), \hat J_m] = -\alpha Y_\alpha(w)w^{m}$. 
Thus, it holds that, by the 
derivation property of the commutator with the generator, 
\begin{align*}
-&[[Y_\alpha(w)\otimes  Y_\alpha(w^{-1}), Y_\beta(z)\otimes  Y_\beta(z^{-1})], 
\hat J_m \otimes \bb1]\\
&\qquad
= [[\hat J_m\otimes \bb1, Y_\alpha(w)\otimes  Y_\alpha(w^{-1})],  
Y_\beta(z)\otimes  Y_\beta(z^{-1})] \\
&\qquad \qquad +  [[Y_\beta(z)\otimes  Y_\beta(z^{-1}), 
\hat J_m\otimes \bb1], Y_\alpha(w) \otimes  Y_\alpha(w^{-1})]\\
 &\qquad = 
(\alpha w^{m} + \beta z^{m})
%(-\alpha z^{m} -\beta \zeta^{m})
[Y_\alpha(w)\otimes  Y_\alpha(w^{-1}),  
 Y_\beta(z)\otimes  Y_\beta(z^{-1})],
\end{align*}
and similarly, with $\bb1\otimes \hat J_m$
that acts as $\bb1\otimes \hat J_m$ on each 
component $\cH_{j\alpha_0}^{\fin}\otimes_\mathrm{alg} \cH_{j\alpha_0}^{\fin}$,
\begin{align*}
&-[[Y_\alpha(w)\otimes  Y_\alpha(w^{-1}), 
Y_\beta(z)\otimes  Y_\beta(z^{-1})], \bb1\otimes \hat J_m] \\
&\qquad = 
(\alpha w^{-m} + \beta z^{-m})
%(-\alpha z^{-m} - \beta\zeta^{-m})
[Y_\alpha(w)\otimes  Y_\alpha(w^{-1}),  
 Y_\alpha(z)\otimes  Y_\alpha(z^{-1})] \, .
\end{align*}
We observe that, upon commuting with
$\hat J_m\otimes\bb1$ or $\bb1\otimes \hat J_m$,
we obtain the same operator 
$[Y_\alpha(w)\otimes  Y_\alpha(w^{-1}),  Y_\alpha(z)\otimes  Y_\alpha(z^{-1})]$
multiplied by a scalar.

Now, to show that the commutator $[Y_\alpha(w)\otimes  Y_\alpha(w^{-1}), Y_\alpha(z)\otimes  Y_\alpha(z^{-1})]$ vanishes,
we only have to check that the matrix
element
vanishes.
It is easy to check that $Y_\alpha(w)\otimes  Y_\alpha(w^{-1})$ commutes with $c_\alpha\otimes c_\alpha$,
therefore, we only have to consider 
pairs 
of vectors in $\cH_0\otimes\cH_0$ and $\cH_\alpha\otimes\cH_\alpha$.
Furthermore, due to the above commutation relations, the 
linear functional
\[
\<\hat J_{-m_1}\cdots \hat J_{-m_k}\Omega_{2\alpha}
\otimes \hat J_{-n_1}\cdots \hat J_{-n_\ell}\Omega_{2\alpha},
\quad\cdot \quad\hat J_{-m'_1}\cdots \hat J_{-m'_{k'}}
\Omega_0\otimes \hat J_{-n'_1}\cdots \hat J_{-n'_{\ell'}}\Omega_0\> 
\]
can be reduced to
the case $\<\Omega_{2\alpha}\otimes\Omega_{2\alpha}, \cdot\;\Omega_0\otimes\Omega_0\>$.

\goodbreak
Let us put
$\underline{Y}_\alpha(z) = E^-(\alpha, z)E^+(\alpha, z)$ and
$\widetilde{\underline{Y}}_\alpha(z) 
= \underline{Y}_\alpha(z)\otimes {\underline{Y}}_\alpha(z^{-1})$.
As we have $E^+(\alpha, z)^* = E^-(-\alpha, z)$ (with the convention that $z^* = z^{-1}$),
it follows that $\underline{Y}_\alpha(z)^* = \underline{Y}_{-\alpha}(z)$
and $\widetilde{\underline{Y}}_\alpha(z)^* = \widetilde{\underline{Y}}_{-\alpha}(z)$,
or equivalently $\widetilde{\underline{Y}}_{m,\alpha}^\dagger 
= \widetilde{\underline{Y}}_{-m,-\alpha}$.

Expand $\widetilde{\underline{Y}}_\alpha(z)$ as
$\widetilde{\underline{Y}}_\alpha(z) 
= \sum_{m \in \mathbb{Z}} \widetilde{\underline{Y}}_{\alpha,m} z^{-m}
= \sum_m \sum_k \underline{Y}_{\alpha,k}\otimes \underline{Y}_{\alpha,k-m} z^{-m}$.
Then we have
\begin{align*}
 Y_{\alpha}(z) \otimes Y_{\alpha}(z^{-1})
 &= (c_\alpha\otimes c_\alpha) z^{\alpha J_0} z^{-\alpha J_0} \underline{Y}_{\alpha}(z)\otimes \underline{Y}_{\alpha}(z^{-1}) \\
 &= (c_\alpha\otimes c_\alpha) \widetilde{\underline{Y}}_{\alpha}(z). 
\end{align*}
Therefore, the question is further reduced to
$\<\Omega_0\otimes \Omega_0, [\widetilde {\underline Y}_{\alpha,m}, \widetilde {\underline Y}_{\beta,n}]\Omega_0\otimes \Omega_0\> = 0$ for all $m,n$.

Note that, with $F$ the flip operator between the left and right tensor components in $\cH_0\otimes \cH_0$
(which is a unitary operator),
\begin{align*}
\widetilde{\underline{Y}}_{\alpha,m}\Omega_0\otimes\Omega_0
&= \sum_k {\underline Y}_{\alpha,k}\otimes  {\underline Y}_{\alpha,k-m}\Omega_0\otimes\Omega_0
= F\cdot \sum_k {\underline Y}_{\alpha,k-m}\otimes  {\underline Y}_{\alpha,k}\Omega_0\otimes\Omega_0 \\
&= F\cdot \sum_k {\underline Y}_{\alpha,k}\otimes  {\underline Y}_{\alpha,k+m}\Omega_0\otimes\Omega_0 
= F\cdot \widetilde {\underline Y}_{\alpha,-m}\Omega_0\otimes\Omega_0 \, .
\end{align*}
This implies that
\begin{align*}
 \<\widetilde{\underline{Y}}_{\alpha,m}\Omega_0\otimes\Omega_0, \widetilde{\underline{Y}}_{\beta,n}\Omega_0\otimes\Omega_0\>
 &= \<F \cdot\widetilde{\underline{Y}}_{\alpha,-m}\Omega_0\otimes\Omega_0,\; F \cdot \widetilde{\underline{Y}}_{\beta,-n}\Omega_0\otimes\Omega_0\> \\
 &= \<\widetilde{\underline{Y}}_{\alpha,-m}\Omega_0\otimes\Omega_0, \; \widetilde{\underline{Y}}_{\beta,-n}\Omega_0\otimes\Omega_0\> \, .
\end{align*}
Moreover, note that the map $J_m \mapsto -J_m$ is a vacuum-preserving automorphism
implemented by a unitary (the multiplication by $(-1)^k$ on the $k$-particle space), and 
its
tensor product maps
\begin{align*}
\widetilde{\underline{Y}}_{\alpha}(z) 
&= E^-(\alpha,z)E^+(\alpha,z)
\otimes  E^-(\alpha,z^{-1}) E^+(\alpha,z^{-1}) \\
&\longmapsto E^-(-\alpha,z)E^+(-\alpha,z)\otimes  E^-(-\alpha,z^{-1}) E^+(-\alpha,z^{-1}) = \widetilde{\underline{Y}}_{-\alpha}(z) \, . 
\end{align*}
That is, $\widetilde{\underline{Y}}_{\alpha, m}$ is mapped 
to $\widetilde{\underline{Y}}_{-\alpha,m}$,
therefore, by the invariance of $\Omega_0\otimes\Omega_0$ by this unitary,
\[
\<\widetilde{\underline{Y}}_{\alpha,m}\Omega_0\otimes\Omega_0, \;\widetilde{\underline{Y}}_{\beta,n}\Omega_0\otimes\Omega_0\>
= \<\widetilde{\underline{Y}}_{-\alpha,m}\Omega_0\otimes\Omega_0, \;\widetilde{\underline{Y}}_{-\beta,n}\Omega_0\otimes\Omega_0\> \, . 
\]
From this, we can compute the commutator (as a sesquilinear form)
\begin{align*}
 &\<\Omega_0\otimes\Omega_0, [\widetilde{\underline{Y}}_{\alpha,m}, \widetilde{\underline{Y}}_{\beta,n}]\Omega_0\otimes\Omega_0\> \\
 &= \<\Omega_0\otimes\Omega_0, (\widetilde{\underline{Y}}_{\alpha,m}\widetilde{\underline{Y}}_{\beta,n} - \widetilde{\underline{Y}}_{\beta,n}\widetilde{\underline{Y}}_{\alpha,m})\Omega_0\otimes\Omega_0\>\\
 &= \<\widetilde{\underline{Y}}_{-\alpha,-m}\Omega_0\otimes\Omega_0, \widetilde{\underline{Y}}_{\beta,n}\Omega_0\otimes\Omega_0\>
  - \<\widetilde{\underline{Y}}_{-\beta,-n}\Omega_0\otimes\Omega_0, \widetilde{\underline{Y}}_{\alpha,m}\Omega_0\otimes\Omega_0\> \\
 &= \<\widetilde{\underline{Y}}_{-\alpha,m}\Omega_0\otimes\Omega_0, \widetilde{\underline{Y}}_{\beta,-n}\Omega_0\otimes\Omega_0\>
  - \<\widetilde{\underline{Y}}_{-\beta,-n}\Omega_0\otimes\Omega_0, \widetilde{\underline{Y}}_{\alpha,m}\Omega_0\otimes\Omega_0\> \\
 &= \<\widetilde{\underline{Y}}_{\alpha,m}\Omega_0\otimes\Omega_0, \widetilde{\underline{Y}}_{-\beta,-n}\Omega_0\otimes\Omega_0\>
  - \<\widetilde{\underline{Y}}_{-\beta,-n}\Omega_0\otimes\Omega_0, \widetilde{\underline{Y}}_{\alpha,m}\Omega_0\otimes\Omega_0\>.
\end{align*}
Furthermore, by construction \eqref{eq:Eaz} of $\widetilde{\underline{Y}}_m = \sum_{k \in \bbR}\underline{Y}_{\alpha, k}\otimes  \underline{Y}_{\alpha, k-m}$,
these expectation values give only real numbers. Therefore, by hermitianity of the scalar product,
this commutator vanishes on the vacuum state.
\end{proof}

\section{Perturbation by charged fields}
According to the general idea of Section \ref{general},
we wish to perturb the net $\widetilde \cA$ by a field by the methods of Barata-J\"akel-Mund \cite{BJM23}.
That is, while keeping the $T^{11}$ component of the stress-energy tensor,
we add a smeared local field to the $T^{00}$ component on the time-zero circle $S^1$.

The necessary condition for it to work is that the new operators satisfy the
Lorentz relations (the generators of the Lorentz group are complex linear combinations of $\mathfrak{l}_m, m=-1,0,1$, cf.\! Section \ref{perturbation}):
\begin{align*}
 [\mathfrak{l}_m, \mathfrak{l}_n] = (m-n)\mathfrak{l}_{m+n}, \quad m,n = -1,0,1.
\end{align*}
As we do not know whether the smeared field can be multiplied on the domain of the old generators,
we consider the weak commutation relation: for two vectors $\Psi_1, \Psi_2$,
we compute $\<A^*\Psi_1, B\Psi_2\> - \<B^*\Psi_1, A\Psi_2\>$.
Obviously, if the commutator $[A,B]$ can be defined on the domain and calculated,
then it implies the weak commutation relation.

We pick the symmetric field
\[
\widetilde\psi^\alpha(w,z) = Y_{\alpha}(w)\otimes Y_{\alpha}(z) + (Y_{\alpha}(w)\otimes Y_{\alpha}(z))^*
=Y_{\alpha}(w)\otimes Y_{\alpha}(z) + Y_{-\alpha}(w)\otimes Y_{-\alpha}(z), 
\]
restrict it to the time-zero circle $z= w^{-1}$ and smear it with
$\mathfrak{e}_j(\theta), n = -1,0,1$. Correspondingly, we consider the coefficients of $z^{-n}$:
\[
\sum_{k \in \bbR} Y_{\alpha,k}\otimes  Y_{\alpha,-n+k} + \sum_{s \in \bbR} Y_{-\alpha,s}\otimes  Y_{-\alpha,-n+s}
= \sum_{k \in \bbR, \epsilon = \pm 1} Y_{\epsilon\alpha,k}\otimes  Y_{\epsilon\alpha,-n+k}.
\]
\begin{theorem}
The Lorentz relations are weakly satisfied for
   \begin{align*}
    & \hat L_1\otimes\bb1 + \bb1 \otimes \hat L_{-1} 
    + \lambda \sum_{k \in \bbR, \epsilon 
    = \pm 1} Y_{\epsilon\alpha,k}\otimes  Y_{\epsilon\alpha,-1+k} \, , 
    \\
    & \hat L_0\otimes\bb1 - \bb1\otimes \hat L_0   \, ,  \\
    & \hat L_{-1}\otimes\bb1 + \bb1\otimes \hat L_1 
    + \lambda \sum_{k \in \bbR, \epsilon 
    = \pm 1} Y_{\epsilon\alpha,k}\otimes  Y_{\epsilon\alpha,1+k} \, , 
   \end{align*}
where $k$ runs in $\bbR$, but there are only countable nonzero terms on 
each $\cH_\beta\otimes\cH_\beta$,
on the domain $\bigoplus_{j,\mathrm{alg}}\cH_{j\alpha_0}^{\fin}
\otimes_\mathrm{alg} \cH_{j\alpha_0}^{\fin}$.
\end{theorem}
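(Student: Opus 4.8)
The plan is to write the three perturbed operators as $\mathfrak{L}_1 = \mathfrak{l}_1 + \lambda V_1$, $\mathfrak{L}_{-1} = \mathfrak{l}_{-1} + \lambda V_{-1}$ and (unperturbed) $\mathfrak{L}_0 = \mathfrak{l}_0 = \hat L_0\otimes\bb1 - \bb1\otimes\hat L_0$, where $\mathfrak{l}_{\pm1} = \hat L_{\pm1}\otimes\bb1 + \bb1\otimes\hat L_{\mp1}$ and $V_n := \sum_{\epsilon = \pm1}\sum_{k\in\bbR} Y_{\epsilon\alpha,k}\otimes Y_{\epsilon\alpha,k-n}$ is the coefficient of $w^{-n}$ in $\widetilde\psi^\alpha(w,w^{-1})$; by the convergence result of Section~\ref{estimates} each $V_n$ is a well-defined operator from $D := \bigoplus_{j,\mathrm{alg}}\cH_{j\alpha_0}^{\fin}\otimes_\mathrm{alg}\cH_{j\alpha_0}^{\fin}$ into $\widetilde\cH$. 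Because $V_{\pm1}$ do not preserve $D$, every relation is to be read as an identity of sesquilinear forms on $D$, namely $\langle\mathfrak{L}_m^*\Phi,\mathfrak{L}_n\Psi\rangle - \langle\mathfrak{L}_n^*\Phi,\mathfrak{L}_m\Psi\rangle = (m-n)\langle\Phi,\mathfrak{L}_{m+n}\Psi\rangle$ for $\Phi,\Psi\in D$ and $m,n\in\{-1,0,1\}$. I would first record, from the first claim of Theorem~\ref{th:commutativity} (equivalently, since $\widetilde\psi^\alpha$ is a symmetric field), that $V_n^* = V_{-n}$ on $D$; together with $\mathfrak{l}_1^* = \mathfrak{l}_{-1}$ and the symmetry of $\mathfrak{l}_0$ this gives $\mathfrak{L}_1^* = \mathfrak{L}_{-1}$ and $\mathfrak{L}_0^* = \mathfrak{L}_0$, so the forms above are meaningful, and only the brackets with $m\ne n$ need to be checked.

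For the two rotation relations $[\mathfrak{L}_0,\mathfrak{L}_{\pm1}] = \mp\mathfrak{L}_{\pm1}$ I would use that $\mathfrak{L}_0 = \mathfrak{l}_0$ is unperturbed and maps $D$ into $D$, so it suffices to add the purely Virasoro identity $[\mathfrak{l}_0,\mathfrak{l}_{\pm1}] = \mp\mathfrak{l}_{\pm1}$ (valid as operators on $D$) to $\lambda[\mathfrak{l}_0,V_{\pm1}] = \mp\lambda V_{\pm1}$. For the latter, \eqref{eq:energybounds} makes each $Y_{\epsilon\alpha,s}$ bounded while \eqref{eq:primary} with $m=0$ shows it lowers the conformal energy by $s$, so every summand of $V_{\pm1}$ carries a homogeneous vector to a homogeneous vector, its termwise commutator with $\mathfrak{l}_0$ equals $\mp$ itself, and summation — norm-convergent by Section~\ref{estimates}, with $\langle\cdot,\cdot\rangle$ continuous — gives the relation weakly.

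The boost relation $[\mathfrak{L}_1,\mathfrak{L}_{-1}] = 2\mathfrak{L}_0$ is the crux. Expanding (weakly), $[\mathfrak{L}_1,\mathfrak{L}_{-1}] = [\mathfrak{l}_1,\mathfrak{l}_{-1}] + \lambda\bigl([\mathfrak{l}_1,V_{-1}] + [V_1,\mathfrak{l}_{-1}]\bigr) + \lambda^2[V_1,V_{-1}]$. The first term equals $2\mathfrak{l}_0$ as operators on $D$, with no central piece since the Virasoro cocycle $\tfrac{c}{12}m(m^2-1)$ vanishes at $m=1$. For the second, I would compute the termwise brackets $[\hat L_{\pm1},Y_{\epsilon\alpha,s}]$ from \eqref{eq:primary} and, after shifting the summation index, obtain $[\mathfrak{l}_1,V_{-1}] = -2\sum_{\epsilon=\pm1}\sum_k k\,Y_{\epsilon\alpha,k}\otimes Y_{\epsilon\alpha,k}$ and $[V_1,\mathfrak{l}_{-1}] = +2\sum_{\epsilon=\pm1}\sum_k k\,Y_{\epsilon\alpha,k}\otimes Y_{\epsilon\alpha,k}$, so the two $O(\lambda)$ contributions cancel; here only one value of $k$ contributes to any matrix element between homogeneous vectors, hence these are finite sums and convergence is not at issue. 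For the third, Theorem~\ref{th:commutativity}, applied to each of the four combinations $Y_{\epsilon\alpha}(w)\otimes Y_{\epsilon\alpha}(w^{-1})$ against $Y_{\epsilon'\alpha}(z)\otimes Y_{\epsilon'\alpha}(z^{-1})$ with $\epsilon,\epsilon'\in\{\pm1\}$, gives $[V_1,V_{-1}] = 0$ on $D$. Hence $[\mathfrak{L}_1,\mathfrak{L}_{-1}] = 2\mathfrak{l}_0 = 2\mathfrak{L}_0$ weakly.

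The substantive ingredients — convergence of the time-zero field (Section~\ref{estimates}) and commutativity of its coefficients (Theorem~\ref{th:commutativity}) — are already in hand, and the Virasoro and primary-field brackets are mechanical, so I expect the main obstacle to be essentially organizational: maintaining the weak, sesquilinear-form reading throughout, and justifying at each step the interchange of the merely norm-convergent series defining $V_{\pm1}$ with the inner product, which is licit by continuity together with the estimates of Section~\ref{estimates}. The only computation one cannot avoid is the short one verifying that the $O(\lambda)$ cross terms genuinely cancel and that no anomaly appears at $m=\pm1$; without this the asserted relations would simply fail.
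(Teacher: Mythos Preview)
Your proposal is correct and follows essentially the same route as the paper: split the weak commutator into the Virasoro piece, the $O(\lambda)$ cross terms computed from the primarity relation \eqref{eq:primary} (which cancel), and the $O(\lambda^2)$ piece which vanishes by Theorem~\ref{th:commutativity}. Your remark that only one $k$ contributes to any matrix element of the intermediate expression $\sum_k k\,Y_{\epsilon\alpha,k}\otimes Y_{\epsilon\alpha,k}$ is a useful bit of care that the paper leaves implicit.
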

\begin{proof}
It is clear that, if $\lambda=0$, these are the old generators and they satisfy
the Lorentz relations on the domain
$\bigoplus_{j,\mathrm{alg}}\cH_{j\alpha_0}^{\fin}
\otimes_\mathrm{alg} \cH_{j\alpha_0}^{\fin}$.
The new terms commute with each other
in the weak sense as we have seen in Section \ref{comm},
hence we only have to check the commutation relations between 
the old terms and the new terms.

One new term can be applied to a vector in the domain and gives a convergent series,
hence we can compute the weak commutator term by term.
For a primary field $Y_\alpha(z)$, it holds that
$[\hat L_m, Y_{\alpha,n}] = ((d-1)m-n)Y_{\alpha,m+n}$ in the operator sense, and hence also in the weak sense.
Therefore, we have the following commutation relation in the weak sense:
\begin{align*}
	& \left[\hat L_m\otimes \bb1 + \bb1\otimes \hat L_{-m} \, , \, 
	\sum_{k \in \bbR} Y_{\alpha,k}\otimes  Y_{\alpha, -n+k} \right] 
	\\
	& \quad =  \sum_{k \in \bbR} \left(((d-1)m-k) Y_{\alpha, k+m} \otimes  Y_{\alpha, -n+k} 
		+ ((d-1)(-m)-(-n+k)) Y_{\alpha, k} \otimes  Y_{\alpha, -m-n+k}\right) 
		\\
	& \quad =  \sum_{k \in \bbR} \left(((d-1)m-(k-m)) Y_{\alpha, k} \otimes  Y_{\alpha, -m-n+k}\right. \\
		&\qquad\qquad \left.+ ((d-1)(-m)-(-n+k)) 	Y_{\alpha, k} \otimes  Y_{\alpha, -m-n+k}\right) \\
	& \quad =  \sum_{k \in \bbR} (m+n-2k) Y_{\alpha, k} \otimes  Y_{\alpha, -m-n+k} \, . 
\end{align*}
As $\alpha$ is arbitrary, this holds even if $\alpha$ is replaced by $-\alpha$.
Furthermore, $\sum_{k \in \bbR} Y_{\alpha,k}\otimes  Y_{\alpha, -n+k}$
and $\sum_{k \in \bbR} Y_{-\alpha,k}\otimes  Y_{-\alpha, -n+k}$
commute by Theorem \ref{th:commutativity}.
Altogether,
\begin{align*}
	& \left[ \hat L_m \otimes \bb1 + \bb1 \otimes \hat L_{-m} + \lambda\sum_{k \in \bbR, \epsilon = \pm 1} Y_{\epsilon\alpha,k}\otimes  Y_{\epsilon\alpha, -m+k},\right. \\
	&\qquad\qquad \left.\hat L_n\otimes \bb1 + \bb1\otimes \hat L_{-n} + \lambda\sum_{k \in \bbR, \epsilon = \pm 1} Y_{\epsilon\alpha,k}\otimes  Y_{\epsilon\alpha, -n+k}
	\right] 
	\\
	& \quad =  (m-n)\hat L_{m+n}\otimes \bb1 - (m-n) \bb1 \otimes \hat L_{-m-n} 
	\\
 	& \quad\;\; + \lambda\sum_{k \in \bbR, \epsilon = \pm 1} (m+n-2k) Y_{\epsilon\alpha, k} \otimes  Y_{\epsilon\alpha, -m-n+k} 
	- \lambda\sum_{k \in \bbR, \epsilon = \pm 1} (n+m-2k) 	Y_{\epsilon\alpha, k} \otimes  Y_{\epsilon\alpha, -n-m+k} \\
	& \quad =  (m-n)\hat L_{m+n}\otimes \bb1 - (m-n) \bb1 \otimes \hat L_{-m-n} \, ,
\end{align*}
and for $m=1, n=-1$, this is $2(\hat L_0\otimes \bb1 - \bb1\otimes \hat L_0)$.

On the other hand,
\begin{align*}
	 & \left[ \hat L_0  \otimes \bb1 - \bb1\otimes \hat L_0 , \; \hat L_m \otimes \bb1 + \bb1 \otimes \hat L_{-m} 
		+ \lambda\sum_{k \in \bbR, \epsilon = \pm 1} Y_{\epsilon\alpha,k}\otimes  Y_{\epsilon\alpha, -m+k} \right] 
		\\
	&\quad =  (-m)\hat L_{m}\otimes \bb1 - (m) \bb1 \otimes \hat L_{-m} \\
	&\qquad + \lambda\sum_{k \in \bbR, \epsilon = \pm 1} \left((-k) Y_{\epsilon\alpha, k} \otimes  Y_{\epsilon\alpha, -m+k} 
	- (-(-m+k)) Y_{\epsilon\alpha, k} \otimes  Y_{\epsilon\alpha, -m+k}\right) 
	\\
	& \quad =  (-m)\left(\hat L_{m}\otimes \bb1 + \bb1\otimes \hat L_{-m} 
	+ \lambda\sum_{k \in \bbR, \epsilon = \pm 1}  Y_{\epsilon\alpha, k} \otimes  Y_{\epsilon\alpha, -m+k}\right) \, .
\end{align*}
For $m=1,-1$, we obtain the right commutation relations 
between $\mathfrak{l}_0$ and $\mathfrak{l}_m$.
\end{proof}

Note that the Lorentz relations do not extend beyond $m=1,0,-1$,
that is, they do not satisfy the Virasoro relations.

In order to implement the perturbation by this commutative field,
we need to solve the following
problems: show 
that the above generators are self-adjoint on a certain domain
and generate a dynamics that satisfies finite speed of propagation.

\bigskip

On the other hand, on the time-zero circle, there is a new representation 
of the Virasoro algebra
(with non-positive energy) with $c=0$, or the Witt algebra.

\begin{proposition}\label{pr:virasoro}
The Virasoro relations are weakly satisfied with $c=0$ for
 \begin{align*}
  & \hat L_m\otimes\bb1 - \bb1\otimes \hat L_{-m} + i\lambda m\sum_{k \in \bbR, \epsilon = \pm 1} Y_{\epsilon\alpha,k}\otimes  Y_{\epsilon\alpha,-m+k},
 \end{align*}
 where $k$ runs in $\bbR$, but there are only countable nonzero terms on each $\cH_\beta\otimes\cH_\beta$.
\end{proposition}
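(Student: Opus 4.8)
The plan is to set $D_m := \hat L_m \otimes \bb1 - \bb1 \otimes \hat L_{-m}$ and $P_n := \sum_{\epsilon = \pm 1}\sum_{k \in \bbR} Y_{\epsilon\alpha, k}\otimes Y_{\epsilon\alpha, -n+k}$, so that the operators in the statement are $G_m = D_m + i\lambda m P_m$, and to prove $[G_m, G_n] = (m-n)G_{m+n}$ for all $m,n\in\bbZ$ (with no central term) as a sesquilinear form on $\bigoplus_{j,\mathrm{alg}}\cH_{j\alpha_0}^{\fin}\otimes_\mathrm{alg} \cH_{j\alpha_0}^{\fin}$. As in the theorem on the Lorentz relations above, every series involved is term-by-term convergent on this domain by the estimate of Section \ref{estimates}, and each $\hat L_p\otimes\bb1$, $\bb1\otimes\hat L_p$ preserves it, so the weak commutator may be expanded bilinearly as $[D_m,D_n] + i\lambda n[D_m, P_n] + i\lambda m[P_m, D_n] + (i\lambda)^2 mn [P_m,P_n]$ and each term treated separately.

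First I would verify that $[D_m, D_n] = (m-n) D_{m+n}$ with \emph{no} central term. Since the two tensor factors commute, $[D_m,D_n] = [\hat L_m \otimes \bb1, \hat L_n \otimes \bb1] + [\bb1 \otimes \hat L_{-m}, \bb1 \otimes \hat L_{-n}]$; using $[\hat L_p,\hat L_q] = (p-q)\hat L_{p+q} + \tfrac{1}{12}p(p^2-1)\delta_{p,-q}$, the first bracket is $(m-n)\hat L_{m+n}\otimes\bb1 + \tfrac{1}{12}m(m^2-1)\delta_{m,-n}$ and the second, carrying the modes $\hat L_{-m},\hat L_{-n}$, is $(n-m)\bb1\otimes\hat L_{-m-n} - \tfrac{1}{12}m(m^2-1)\delta_{m,-n}$. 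The two $\delta_{m,-n}$ terms cancel, and the remainder is $(m-n)\bigl(\hat L_{m+n}\otimes\bb1 - \bb1\otimes\hat L_{-m-n}\bigr) = (m-n)D_{m+n}$. This cancellation of the left and right central contributions is precisely why the central charge comes out to be $0$.

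Next, using primarity $[\hat L_p, Y_{\epsilon\alpha, t}] = ((d-1)p - t)Y_{\epsilon\alpha, t+p}$ with $d = \tfrac{\alpha^2}{2}$ (independent of the sign $\epsilon$), I would run the same computation as in the Lorentz theorem but now keeping the relative minus sign in $D_m$, obtaining
\[
 \Bigl[D_m,\; \textstyle\sum_k Y_{\epsilon\alpha,k}\otimes Y_{\epsilon\alpha, -n+k}\Bigr] = \bigl((2d-1)m - n\bigr)\sum_k Y_{\epsilon\alpha,k}\otimes Y_{\epsilon\alpha, -m-n+k};
\]
the crucial point is that the $k$-dependent contribution $-k$ from $\hat L_m\otimes\bb1$ and the contribution $+k$ from $-\bb1\otimes\hat L_{-m}$ (the sign flipping because of the minus sign in $D_m$ together with the shift index $-m$) cancel, leaving a $k$-independent coefficient. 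Summing over $\epsilon$ gives $[D_m, P_n] = ((2d-1)m-n)P_{m+n}$. Moreover $[P_m, P_n] = 0$ weakly: since $\alpha\in\alpha_0\bbZ$, both $\alpha$ and $-\alpha$ lie in $\alpha_0\bbZ$, so Theorem \ref{th:commutativity} yields $[Y_{\epsilon\alpha}(w)\otimes Y_{\epsilon\alpha}(w^{-1}),\, Y_{\epsilon'\alpha}(z)\otimes Y_{\epsilon'\alpha}(z^{-1})] = 0$ for each choice of signs, hence the same for the time-zero Fourier coefficients.

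Finally I would assemble the pieces. The two cross terms combine into
\[
 i\lambda n\bigl((2d-1)m-n\bigr)P_{m+n} - i\lambda m\bigl((2d-1)n-m\bigr)P_{m+n} = i\lambda\,(m^2-n^2)\,P_{m+n},
\]
in which the $(2d-1)$-contributions cancel identically; using $m^2-n^2 = (m-n)(m+n)$ together with $[D_m,D_n] = (m-n)D_{m+n}$ and $mn[P_m,P_n]=0$, the full commutator becomes $(m-n)\bigl(D_{m+n} + i\lambda(m+n)P_{m+n}\bigr) = (m-n)G_{m+n}$, with no $\delta_{m,-n}$ term anywhere, i.e.\ the Virasoro relations with $c=0$. (Note $G_0 = D_0$ is the undeformed rotation generator, consistent with Section \ref{perturbation}.) The step I expect to be most delicate is the sign bookkeeping in $D_m = \hat L_m\otimes\bb1 - \bb1\otimes\hat L_{-m}$ — both in seeing the two Virasoro central terms cancel and in seeing the $k$-dependence of $[D_m,P_n]$ drop out; the rest is a direct rerun of the computations already performed for the Lorentz relations, together with the convergence and commutativity established in Sections \ref{estimates} and \ref{comm}.
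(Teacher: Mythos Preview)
Your proposal is correct and follows essentially the same approach as the paper: both proofs compute $[D_m,P_n]=((2d-1)m-n)P_{m+n}$ from primarity, invoke Theorem~\ref{th:commutativity} for $[P_m,P_n]=0$, and then combine the cross terms via $n((2d-1)m-n)-m((2d-1)n-m)=m^2-n^2=(m-n)(m+n)$. You are slightly more explicit than the paper in spelling out the cancellation of the two chiral central terms in $[D_m,D_n]$, but this is the same computation.
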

\begin{proof}
 As before, we compute the commutation relations weakly. First,
 with $d = \frac{\alpha^2}2 = \frac{(-\alpha)^2}2$,
\begin{align*}
 &\left[\hat L_m\otimes \bb1 - \bb1\otimes \hat L_{-m}, \sum_{k \in \bbR, \epsilon = \pm 1} Y_{\epsilon\alpha,k}\otimes  Y_{\epsilon\alpha, -n+k}\right] \\
 &=  \sum_{k \in \bbR, \epsilon = \pm 1} \left(((d-1)m-k) Y_{\epsilon\alpha, k+m} \otimes  Y_{\epsilon\alpha, -n+k}\right. \\
 &\qquad\quad - \left.((d-1)(-m)-(-n+k)) Y_{\epsilon\alpha, k} \otimes  Y_{\epsilon\alpha, -m-n+k}\right) \\
 &=  \sum_{k \in \bbR, \epsilon = \pm 1} \left(((d-1)m-(k-m)) Y_{\epsilon\alpha, k} \otimes  Y_{\epsilon\alpha, -m-n+k}\right. \\
 &\qquad\qquad \left.- ((d-1)(-m)-(-n+k)) Y_{\epsilon\alpha, k} \otimes  Y_{\epsilon\alpha, -m-n+k}\right) \\
 &=  \sum_{k \in \bbR, \epsilon = \pm 1} ((2d-1)m -n) Y_{\epsilon\alpha, k} \otimes  Y_{\epsilon\alpha, -m-n+k}
\end{align*}
and by plugging this into the full expressions,
\begin{align*}
	& \left[ \hat L_m\otimes \bb1 - \bb1 \otimes \hat L_{-m} 
	+ i\lambda m\sum_{k \in \bbR, \epsilon = \pm 1} Y_{\epsilon\alpha,k}\otimes  Y_{\epsilon\alpha, -m+k},\right. \\
	&\qquad \qquad \left.\hat L_n\otimes \bb1 - \bb1\otimes \hat L_{-n} + i\lambda n\sum_{k \in \bbR, \epsilon = \pm 1} Y_{\epsilon\alpha,k}\otimes  Y_{\epsilon\alpha, -n+k} \right] 
	\\
	&=  (m-n)\hat L_{m+n}\otimes \bb1 + (-m-n) \bb1 \otimes \hat L_{-m-n} 
	\\
	& \qquad + i\lambda\sum_{k \in \bbR, \epsilon = \pm 1} (n((2d-1)m -n) - m((2d-1)n -m)) 
	Y_{\epsilon\alpha, k} \otimes  Y_{\epsilon\alpha, -m-n+k} 
	\\
	&=  (m-n)\left(\hat L_{m+n}\otimes \bb1 - \bb1\otimes \hat L_{-m-n} 
	+ i\lambda (m+n) \sum_{k \in \bbR, \epsilon = \pm 1} Y_{\epsilon\alpha, k} \otimes  Y_{\epsilon\alpha, -m-n+k}\right) \, .
\end{align*}
\end{proof}
The combination $i\lambda m$ means that we are taking the derivative $-i\partial_\theta\widetilde\psi^\alpha(e^{i\theta},e^{-i\theta})$.
The operators $\hat L_m\otimes \bb1 - \bb1 \otimes \hat L_{-m}$
are the generators of
the time-zero Virasoro (Witt) algebra, and 
Proposition \ref{pr:virasoro}
tells that there are different actions of the Virasoro algebra with $c=0$.

In addition, a formal calculation shows that, for $d=\frac12$ (and only for this case),
there is another set of expressions having similar relations. That is, for
 \begin{align*}
  & \hat L_m\otimes\bb1 - \bb1\otimes \hat L_{-m} 
  + \lambda \sum_{k \in \bbR, \epsilon = \pm 1} Y_{\epsilon\alpha, k}\otimes  Y_{\epsilon\alpha,-m+k},
 \end{align*}
we calculate formally the commutators (this is only formal because for $d=\frac12$
we do not have the convergence for the product of two such expressions evaluated in a pair of vectors).
In Proposition \ref{pr:virasoro} we have seen that
\begin{align*}
	 \left[ \hat L_m\otimes \bb1 - \bb1\otimes \hat L_{-m}, 
	\sum_{k \in \bbR, \epsilon = \pm 1} Y_{\epsilon\alpha,k}\otimes  Y_{\epsilon\alpha, -n+k} \right] 
 	&=  \sum_{k \in \bbR, \epsilon = \pm 1} ((2d-1)m -n) Y_{\epsilon\alpha, k} \otimes  Y_{\epsilon\alpha, -m-n+k}
\end{align*}
and the full commutators are now
\begin{align*}
	& \left[\hat L_m\otimes \bb1 - \bb1\otimes \hat L_{-m} 
	+ \lambda \sum_{k \in \bbR, \epsilon = \pm 1} Y_{\epsilon\alpha,k}\otimes  Y_{\epsilon\alpha, -m+k},\right. \\
	&\qquad \left.\hat L_n\otimes \bb1 
	- \bb1\otimes \hat L_{-n} + \lambda \sum_{k \in \bbR, \epsilon 
	= \pm 1} Y_{\epsilon\alpha,k}\otimes  Y_{\epsilon\alpha, -n+k} \right]
	\\
	&=  (m-n)\hat L_{m+n}\otimes \bb1 + (-m-n)\otimes \hat L_{-m-n} 
	\\
	& \qquad + \lambda\sum_{k \in \bbR, \epsilon = \pm 1} (((2d-1)m - n) 
	- ((2d-1)n -m)) Y_{\epsilon\alpha, k} \otimes  Y_{\epsilon\alpha, -m-n+k} 
	\\
	&=  (m-n)\Bigl(\hat L_{m+n}\otimes \bb1 - \otimes \hat L_{-m-n} 
	+ 2d\lambda \sum_{k \in \bbR, \epsilon = \pm 1} Y_{\epsilon\alpha, k} \otimes  Y_{\epsilon\alpha, -m-n+k}\Bigr) \, .
\end{align*}
The last expression in the bracket coincides with
$\hat L_{m+n}\otimes \bb1 - \bb1\otimes \hat L_{-m-n} 
	+ \lambda \sum_{k \in \bbR, \epsilon = \pm 1} Y_{\epsilon\alpha,k}\otimes  Y_{\epsilon\alpha, -m-n+k}$
	if and only if $d=\frac12$.
The case $d=\frac12$ is related to free fermions.
This might indicate a hidden symmetry for free fermions.

\section{Outlook}
We need that the above generators are self-adjoint on a certain domain. This is open.
To show that they are essentially self-adjoint on our domain, one way would be to use the analytic vector
theorem, but it is unclear whether even $\Omega_0\otimes\Omega_0$ is an analytic vector.
Therefore, we need better estimates of the time-zero restriction
$Y_\alpha(w)\otimes Y_\alpha(w^{-1})$. Such estimates will be needed also to show
that the perturbed Lorentz generators do generate a new representation of the Lorentz group,
that we can construct a new Haag-Kastler net on $\dS^2$ and to find the interacting vacuum.
For this purpose, studying the Euclidean models of these two-dimensional CFT might help.

There are many two-dimensional CFTs and some of the charged primary fields have been relatively well-understood.
It might be a good idea to take other models where charged fields allow better control.

% \subsection*{Conflict of interest}
% On behalf of all authors, the corresponding author states that there is no conflict of interest.
% 
% \subsubsection*{Data availability}
% Data sharing not applicable to this article as no datasets were generated or analysed during the current study.

\subsection*{Acknowledgements}
Y.T.\! thanks Bin Gui for useful discussions on primary fields and estimates.
Y.T.\ is partially supported by
the \emph{MUR Excellence Department Project MatMod@TOV} awarded to the Department of Mathematics,
University of Rome Tor Vergata CUP E83C23000330006,
the University of Rome Tor Vergata funding \emph{OAQM} CUP E83C22001800005
and by GNAMPA--INdAM.

\appendix

\def\polhk#1{\setbox0=\hbox{#1}{\ooalign{\hidewidth
  \lower1.5ex\hbox{`}\hidewidth\crcr\unhbox0}}} \def\cprime{$'$}

% {\small
% \bibliographystyle{alpha}
% \bibliography{../../../aqft}
% }

\end{document}